\documentclass[11pt,fleqn,letterpaper]{article}
\usepackage{fullpage,amsmath,amssymb,amsthm}
\usepackage{xspace}
\usepackage{multirow}
\usepackage[ruled,vlined,linesnumbered]{algorithm2e}
\usepackage[hypertex]{hyperref}


\def\tO{\tilde{O}}
\def\tOmega{\tilde{\Omega}}
\def\ala{\`a la\xspace}
\newcommand{\REMOVE}[1]{}

\newcommand{\veps}{\varepsilon}

\newcommand{\minn}[1]{\min\{#1\}}
\newcommand{\maxx}[1]{\max\{#1\}}
\newcommand{\aset}[1]{\{#1\}}

\def\Schaffer{Sch{\"a}ffer}

\DeclareMathOperator{\EX}{\mathbb E}

\newtheorem{theorem}{Theorem}[section]
\newtheorem{lemma}[theorem]{Lemma}

\newtheorem{claim}[theorem]{Claim}

\theoremstyle{definition}

\def\Lovasz{Lov\'asz\xspace}


%
\newcounter{this-list}

%
\newcounter{par-list}
\newlength{\parlistlength}

\begin{document}


\title{Directed Spanners via Flow-Based Linear Programs}
\author{Michael Dinitz\thanks{Email: \href{mailto:michael.dinitz@weizmann.ac.il}{\texttt{michael.dinitz@weizmann.ac.il}}} \qquad\qquad
Robert Krauthgamer\thanks{Supported in part by The Israel Science Foundation (grant \#452/08), and by a Minerva grant.
Email: \href{mailto:robert.krauthgamer@weizmann.ac.il}{\texttt{robert.krauthgamer@weizmann.ac.il}}
}
\\
Weizmann Institute of Science
}

\begin{titlepage}
\maketitle
\thispagestyle{empty}

\begin{abstract}
We examine directed spanners through flow-based linear programming relaxations.
We design an $\tilde{O}(n^{2/3})$-approximation algorithm
for the directed $k$-spanner problem that works for all $k\geq 1$,
which is the first sublinear approximation for arbitrary edge-lengths.
Even in the more restricted setting of unit edge-lengths,
our algorithm improves over the previous $\tO(n^{1-1/k})$ approximation \cite{BGJRW09} when $k\ge 4$.
For the special case of $k=3$
we design a different algorithm achieving an $\tilde{O}(\sqrt{n})$-approximation,
improving the previous $\tilde{O}(n^{2/3})$ \cite{EP05,BGJRW09}.
Both of our algorithms easily extend to the \emph{fault-tolerant} setting,
which has recently attracted attention but not from an approximation viewpoint.
We also prove a nearly matching integrality gap of
$\tOmega(n^{\frac13 - \epsilon})$ for any constant $\epsilon > 0$.

A virtue of all our algorithms is that they are relatively simple.
Technically, we introduce a new yet natural flow-based relaxation,
and show how to approximately solve it even when its size is not polynomial.
The main challenge is to design a rounding scheme
that ``coordinates'' the choices of flow-paths between the many demand pairs
while using few edges overall.
We achieve this, roughly speaking,
by randomization at the level of \emph{vertices}.
\end{abstract}
\end{titlepage}

\section{Introduction}

We examine several directed spanner problems from the perspective
of approximation via a linear programming (LP) relaxation.
In particular, we design for these classical NP-hard problems
flow-based LP relaxations,
and then investigate how well these relaxations approximate the optimal spanner,
providing nearly matching upper and lower bounds.
We begin by introducing the spanner problems that we consider,
focusing throughout on \emph{directed} graphs;
we briefly compare to undirected graphs in Section \ref{sec:conclusion}.

\subsection{Spanner Problems}

Let $G=(V,E)$ be a a strongly connected directed graph.%
\footnote{The assumption of strong connectivity is for notational convenience,
although the definitions and all our results extend easily to all digraphs.}
A \emph{$k$-spanner} of $G$, for $k\ge 1$,
is a subgraph $G'=(V,E')$,
that preserves all pairwise distances within factor $k$,
i.e. for all $u,v\in V$,
\begin{equation} \label{eq:defn}
  d_{G'}(u,v) \leq k\cdot d_G(u,v).
\end{equation}
Here and throughout, $d_H$ denotes the shortest-path distance in a graph $H$.
It is easy to see that requiring \eqref{eq:defn} only for edges $(u,v)\in E$
suffices.

In the directed $k$-spanner problem with unit lengths,
the input is the graph $G$,
and the goal is to find a $k$-spanner $G'$ having the minimum number of edges.
We allow the \emph{stretch} $k$ to be a function of $n=|V|$,
e.g. $k=O(\log n)$, and in fact some of our results are most interesting when $k = \Omega(\log n)$.
This definition was introduced by Peleg and \Schaffer~\cite{PS89}
(in particular, they showed the problem is NP-hard),
and since then it has been studied extensively,
with applications ranging from routing in networks (e.g. \cite{AP95,TZ05})
to solving linear systems (e.g. \cite{ST04a,EEST08}).

The above definition has several natural generalizations.
An obvious one is to let $G$ have nonnegative edge-lengths,
leading to more complicated distances.
This is the directed $k$-spanner problem \emph{with arbitrary edge-lengths}.
Another generalization, introduced in \cite{CLPR09},
incorporates fault-tolerance:
 a $k$-spanner $G'$ is \emph{$r$-vertex-tolerant} if for all $F \subseteq V$ with $|F| \leq r$ we have that $G' \setminus F$ is a $k$-spanner of $G \setminus F$.  The definition of \emph{$r$-edge-tolerant} is the same, except that $F$ is a subset of $E$ rather than of $V$.
Clearly, the special case $r=0$ is just the standard notion defined above.
This paper address both of these generalizations.

Yet another generalization of the problem is the client-server
model~\cite{EP01}: the input contains also a set $\mathcal C
\subseteq E$ of so-called client edges and a set $\mathcal S
\subseteq E$ of server edges,
the requirement \eqref{eq:defn} is only needed for edges in $\mathcal C$,
while the spanner is only allowed to use
edges in $\mathcal S$ (i.e.~$E' \subseteq \mathcal S$).
Obviously, the case $\mathcal C=\mathcal S=E$ is just the standard notion
defined above.
Our results extend to this model in a straightforward manner, but for
the sake of exposition, we shall not address it directly.

\subsection{Results}

We first present a flow-based LP relaxation for spanner problems
(Section \ref{sec:LP}).
This relaxation is quite natural but appears to be new, and in particular it differs from the ones used in \cite{DK99,BGJRW09}.  We then use this LP relaxation to obtain the approximation algorithms
described below (see also Table~\ref{tab:results}).

\paragraph{General stretch $\mathbf{k}$.}

Our first algorithmic result is an $\tilde{O}(n^{2/3})$-approximation
for the directed $k$-spanner problem that works for all $k \geq 1$,
even with arbitrary edge-lengths (Section~\ref{sec:alg_main}).  This is the first approximation algorithm that handles the more general case of arbitrary edge-lengths. And even for unit edge-lengths, it improves over the previously known
$\tilde{O}(n^{1-1/k})$-approximation for general $k\ge 3$ due to
Bhattacharyya, Grigorescu, Jung, Raskhodnikova, and Woodruff~\cite{BGJRW09}.
Thus our result shows that the approximation need not increase with $k$,
and provides the first sublinear (in $n$) approximation ratio for
$k \geq \log n$.  The cases $k=2,3$ are addressed separately below.
Furthermore, using the reduction of~\cite{BGJRW09}
from Transitive-Closure $k$-spanner to directed $k$-spanner
we obtain for the former problem an $\tilde{O}(n^{2/3})$-approximation,
improving over their $\tO(\min\{n^{1-1/k},n/k^2\})$-approximation
for all $k\ll n^{1/6}$.

We complement the above algorithmic result by
showing that our (rather natural) LP relaxation has an integrality gap
of $\tilde{\Omega}(\frac1k n^{1/3 - \epsilon})$ for every constant $\epsilon > 0$,
even in the unit-length case (Section~\ref{sec:integrality_gap}).
Previously, Elkin and Peleg~\cite{EP07} proved that for
every fixed $0 < \epsilon, \delta < 1$ and $3 \leq k = o(n^{\delta})$,
approximating the directed $k$-spanner with unit edge-lengths problem
within ratio $2^{\log^{1-\epsilon} n}$ is quasi-NP-hard
(similar hardness results were already known for smaller ranges
of $k$~\cite{Kortsarz01}).
We conclude that a polynomial approximation (independent of $k$)
is probably the best one can hope for, and specifically the best possible
exponent appears to be in the range $[1/3,2/3]$.

\paragraph{Stretch $\mathbf{k=3}$.}

For directed $3$-spanner with unit-length edges
we achieve an even better $\tO(\sqrt n)$-approximation
(Section \ref{sec:stretch3}).
Notice that this approximation factor matches, up to lower order factors,
the $O(\sqrt n)$-approximation known for undirected graphs
(an immediate consequence of the absolute guarantee of \cite{ADDJS93}
that every undirected graph has a $k$-spanner with $O(n^{1+2/(k+1)})$ edges).  The previous approximation known for this case is $\tO(n^{2/3})$,
first proved by Elkin and Peleg~\cite{EP05}.
A similar approximation can be obtained also by the aforementioned
algorithm of \cite{BGJRW09}, and by our first algorithm mentioned above.

\paragraph{Stretch $\mathbf{k=2}$.}

This case (directed $2$-spanner with unit-length edges)
is rather exceptional and is known to have tight approximation bounds:
$O(\log n)$ approximation \cite{KP94,EP01}
and $\Omega(\log n)$ NP-hardness \cite{Kortsarz01}.%
\footnote{It is possible to refine the approximation
in terms of the graph's average/maximum degree.}
We show similar bounds on the integrality gap of our LP relaxation
(in Section~\ref{sec:2-spanner}),
a finding that is not very surprising but affirms the strong connection
between our LP relaxation and the approximability threshold.

\paragraph{Fault-tolerant spanners.}

We also adapt our algorithms to the fault-tolerant setting, albeit restricted to unit length edges
(see Sections~\ref{sec:FT_main} and \ref{sec:stretch3_FT}).
The fault-tolerant setting is significantly more complicated:
the LP relaxation might have an exponential
number of both variables and constraints (see Section~\ref{sec:LP_FT}),
and we must resort to bicriteria approximations
when the number of faults $r$ is not constant.
Generally speaking, the approximation factors we obtain grow with $r$ like $k^r$ in the first algorithm (for general $k$),
and polynomially in $r$ in the second algorithm (for $k=3$ with unit-length edges).  These are the first results for fault-tolerant spanners in
\emph{directed} graphs.
For undirected graphs, absolute guarantees (i.e. not as approximation factors)
are known \cite{CLPR09}, and these bounds also grow like $k^r$
when vertices fail, even for the $k=3$ case.

\begin{table}
\begin{center}
    \begin{tabular}{|  l || l r | l r | l r |}
    \multicolumn{7}{l}{\bf Directed $k$-Spanner with unit edge-lengths} \\
    \hline
    Stretch
    & \multicolumn{2}{|c|}{Our Approximation}
    & \multicolumn{2}{|c|}{Previous Approximation}
    & \multicolumn{2}{|c|}{Integrality gap}
    \\ \hline\hline
    $k\ge 4$
    & $\tilde{O}(n^{2/3})$ & Thm \ref{thm:spanner_approx}
    & $\tilde{O}(n^{1-1/k})$ & \cite{BGJRW09}
    & $\Omega(\frac1k \cdot n^{1/3-\veps})$ & Thm \ref{thm:unit_integrality_gap}
    \\ \hline
    $k = 3$
    & $\tilde{O}(n^{1/2})$ & Thm \ref{thm:3spanner_thm}
    & $\tilde{O}(n^{2/3})$ & \cite{EP05,BGJRW09}
    & $\Omega(n^{1/3-\veps})$ & Thm \ref{thm:unit_integrality_gap}
    \\ \hline
    $k = 2$
    & $O(\log n)$ & Thm~\ref{thm:UBdirect1}
    & $O(\log n)$ & \cite{KP94,EP01}
    & $\Omega(\log n)$ & Thm~\ref{thm:2-spannerLB}
    \\
    \hline
  \end{tabular}
\end{center}


\begin{center}
    \begin{tabular}{|  l || l r |  l r |}
    \multicolumn{5}{l}{\bf Similarly but with $r$ (vertex/edge) fault-tolerance}
    \\ \hline
    Stretch
    & \multicolumn{2}{|c|}{Our Approximation}
    & \multicolumn{2}{|c|}{Previous Approximation}
    \\ \hline\hline
    $k\ge 4$
    & $(\frac{1}{k(1+\epsilon)}, O((\frac{(1+\epsilon)r(k+r)^{k+r} n \ln n}{\epsilon r^r k^k})^{2/3}))$ & Thm \ref{thm:FT_approx}
    & \multicolumn{2}{|c|}{---}
    \\ \hline
    $k = 3$
    & $(\frac{1}{3(1+\epsilon)}, \tilde{O}(r n^{1/2}))$ & Thm \ref{thm:3spanner_FT}
    & \multicolumn{2}{|c|}{---}
    \\ \hline
    $k = 2$
    & $O(r \log n)$ & Thm \ref{thm:UBdirect1FT}
    & \multicolumn{2}{|c|}{---}
    \\ \hline
  \end{tabular}
\end{center}

\caption{Summary of our approximation results compared with previous work}
\label{tab:results}
\end{table}

\medskip

\paragraph{Note:} 
Shortly after completing the initial version of this paper we became
aware of a preprint by Berman, Raskhodnikova, and Ruan~\cite{BRR10},
that, independently of our work, constructs an
$O(k \cdot n^{1-1/\lceil k/2\rceil} \log n)$ approximation for
directed unit-length $k$-spanner.  For $k=3$ this gives essentially
the same bound as our $\tilde{O}(\sqrt{n})$-approximation (up to
logarithmic factors), and for $k=4$ their $O(\sqrt{n} \log
n)$ ratio is a polynomial improvement over our $O((n \log n)^{2/3})$
approximation.  For $k=5,6$ the approximation ratio of their algorithm
becomes $O(n^{2/3} \log n)$, basically matching ours.  Their
techniques are not based on linear programming; instead, they first
prove that every valid spanner can be covered by generalized stars using
few edges, and then design a set-cover-like approximation algorithm
for the minimum-size generalized-star cover problem.  While their
ratio is better than the previous
$\tilde{O}(n^{1-1/k})$-approximation~\cite{BGJRW09}, it still only
applies to the unit-length setting and degrades with $k$ so as to give
a nontrivial bound only when $k \leq O(\log n)$, while our
$\tilde{O}(n^{2/3})$-approximation suffers from neither of these
limitations.

\subsection{Techniques}

All of our approximation algorithms rely on solving the LP relaxation
and rounding the resulting ``fractional'' solution.
In some cases the LP relaxation does not have polynomial size,
but we can solve it within a reasonable approximation in polynomial time
by reducing it (via duality and the ellipsoid algorithm)
to a problem known in the literature as Restricted Shortest Path
(Theorems \ref{thm:LP_solve} and \ref{thm:FT_LP_solve}).
A virtue of all our algorithms is that they are relatively simple,
and thus can be extended to more complicated scenarios with little effort,
as is evident in the fault-tolerance case.

Our main technical contribution is to design new rounding procedures
which use few edges but are effective in creating many suitable paths.
Two very natural and well-known rounding techniques fail miserably:
(1) rounding separately each edge proportional to its LP value
(deterministically or randomly) is unlikely to form suitable paths,
and (2) rounding separately each flow-path
(say randomly \ala Raghavan and Thompson \cite{RaTh:RandomizedRounding})
will use far too many edges.
The challenge is thus to ``coordinate'' the selection of edges
so that they tend to create suitable paths.
Put differently, each $(u,v)\in E$
can be seen as a demand pair with its own flow,
and we need to select one flow-path for each pair
in a way that is ``biased'' towards using the same edges.
Ideally, we would like to select both the edges and the flow-paths
proportionally to their LP value.

Our first algorithm, for general $k$, is based on classifying demand pairs
according to whether they have ``few'' or ``many'' low-stretch paths.
The key insight is to make this classification rely on counting
\emph{vertices} participating in low-stretch paths.  The algorithm is then almost straightforward:
applying a threshold rounding of the LP handles pairs of the first type,
and building shortest-path arborescences from a small number of randomly
chosen vertices handles pairs of the second type.
This algorithm is described in Section \ref{sec:alg_main}.  To extend this algorithm to the fault-tolerant case
we use the appropriate LP relaxation and apply the
above rounding technique to several ``perturbations'' of the instance, each
obtained by deleting from the graph a random subset of vertices/edges.  This algorithm is described in Section~\ref{sec:FT_main}.
We believe that this perturbation technique, which we call failure sampling,
is of independent interest,
and may find future applications in related fault-tolerant problems.  For example, a simple application of failure sampling combined with existing \emph{undirected} spanner constructions gives a fault-tolerant undirected spanner construction with an absolute bound on its size that is only $\tilde{O}(r^3)$ larger than for a non-fault-tolerant spanner (where $r$ is the number of faults), while the best previous construction~\cite{CLPR09} has size $O(k^r)$ larger than a non-fault-tolerant spanner.

\medskip
Our second algorithm, for directed $3$-spanner with unit-length edges, uses randomized rounding
but at the level of \emph{vertices} rather than edges or flow-paths.  For every $v\in V$ we choose a random threshold $T_v\in[0,1]$,
and include in the solution every edge $(u,v)\in E$
for which, compared to the edge's LP value,
either $\minn{T_u,T_v}$ is ``small'' or $\maxx{T_u,T_v}$ is ``moderate''.
The probability of including an edge in this solution
is proportional to the edge's LP value,
but our conditions encourage positive correlation along a path
(e.g. for edges sharing an endpoint).
The proof boils down to considering a given demand pair,
and analyzing the possibly many different flow-paths between them.
In some cases, we control the correlation between these flow-paths
using Janson's inequality (see e.g. \cite{AS00,DP09}).
But in other cases such correlation analysis is not effective,
so we structurally ``decompose'' the paths into their
first, second and third hops and then use standard concentration bounds separately for each hop
plus some global arguments based on flow conservation.
This rounding procedure extends to the fault-tolerant case very easily;
we just repeat the rounding procedure several times with fresh coins.
These algorithms are described in Section~\ref{sec:stretch3}.

\subsection{Concluding Remarks and Future Work}\label{sec:conclusion}

Our results for directed $k$-spanner with unit edge-lengths address
what Elkin and Peleg~\cite{EP05} highlighted as two ``challenging directions'':
obtaining sublinear approximation for general $k$
and improving over their $\tO(n^{2/3})$-approximation for $k=3$.
LP-based approaches are quite generic yet often times optimal,
and thus it would not be surprising if the integrality gap of our LP relaxation
gives away the problem's true approximability threshold,
which appears to be polynomial with exponent in the range $[1/3,2/3]$
(possibly just $\tO(\sqrt n)$) for all $k\ge 3$.

A bolder conjecture would be that this LP relaxation also exposes
the approximability threshold for other spanner problems.
One such family of problems is the undirected setting, whose consideration we defer to future work but briefly discuss some preliminary results.
For $k=3$ the known approximation (and thus integrality gap)
is $O(\sqrt n)$ by the absolute guarantee of \cite{ADDJS93} (or our Theorem \ref{thm:3spanner_thm}),
while our best integrality is $\tOmega(n^{1/8})$,
which we prove by a somewhat tricky analysis
of the random graph $G_{n,p}$ for $p\approx 1/\sqrt n$.
For larger $k$, the known approximation (and thus integrality gap)
is $O(n^{2/(k+1)})$ via \cite{ADDJS93},
while our best integrality gap is $n^{\Omega(1/k)}$,
using the reduction designed by \cite{EP07}
from Min-Rep with large girth (which they conjecture to be NP-hard).
Compared to \cite{EP07}, our integrality gap may be viewed as
yet another indication to the inapproximability of basic $k$-spanners,
but with a factor much closer to the known approximation algorithm.

\section{Flow-Based LP Relaxation}\label{sec:LP}

We begin by describing the linear programming relaxation of the directed $k$-spanner problem that we will use.  Consider an instance of the directed $k$-spanner problem: a directed graph $G=(V,E)$ and an assignment of lengths to the edges $d: E \rightarrow \mathbb{R}^{+}$.  For $(u,v) \in E$, let $\mathcal P_{u,v}$ denote the set of all stretch $k$ paths (in $G$) from $u$ to $v$, i.e.~valid paths whose length is within factor $k$ of the shortest.  It is easy to see that the following LP is a relaxation of the $k$-spanner problem.  The variables are $x_e$, representing whether edge $e\in E$ is included in $G'$, and $f_P$, representing flow along path $P$.
\begin{equation} \label{LP:spanner1}
\framebox{ $
\begin{array}{lll}
  \min & \displaystyle \sum_{e\in E} x_e
  \\
  \mathrm{s.t.}
  & \displaystyle \sum_{P\in \mathcal P_{u,v}: e\in P} f_P \le x_e
  & \forall (u,v) \in E,\ \forall e\in E
  \\
  & \displaystyle \sum_{P\in \mathcal P_{u,v}} f_P \ge 1
  & \forall (u,v) \in E
  \\
  & \displaystyle x_e \geq 0 & \forall e \in E
  \\
  & \displaystyle f_P \geq 0 & \forall (u,v) \in E,\ \forall P \in \mathcal P_{u,v}
\end{array}
$ }
\end{equation}

In general LP~\eqref{LP:spanner1} has exponential size.  If the number of paths in $\mathcal P_{u,v}$ is at most polynomial for all $(u,v) \in E$ (for example if all lengths are unit and $k$ is a constant) then the LP obviously has only a polynomial number of variables and constraints, so can be solved optimally.  But in general we will need a different approach.   We will work with the dual, which has a polynomial number of variables and an exponential number of constraints.  So it is sufficient to find a separation oracle for the dual.  It turns out that the separation problem for the dual is the \emph{Restricted Shortest Path} problem, sometimes also called the \emph{Length Constrained Lightest Path} problem.  A PTAS is known for this problem~\cite{LR01, Has92}, so we can approximately separate for the dual.  This is enough to approximately solve the primal.

\begin{theorem} \label{thm:LP_solve}
There is an algorithm that in polynomial time computes a $(1+\epsilon)$ approximation to the optimal solution of LP~\eqref{LP:spanner1} for any constant $\epsilon > 0$.
\end{theorem}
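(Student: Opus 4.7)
The plan is to solve LP~\eqref{LP:spanner1} via its dual, using the ellipsoid method equipped with an approximate separation oracle. First, I would take the dual of LP~\eqref{LP:spanner1}: introduce variables $y_{u,v,e}\ge 0$ for the capacity constraints and $z_{u,v}\ge 0$ for the demand constraints, yielding the maximization of $\sum_{(u,v)\in E} z_{u,v}$ subject to the packing constraints $\sum_{(u,v)\in E} y_{u,v,e}\le 1$ for each $e\in E$, and the path constraints $z_{u,v}\le \sum_{e\in P} y_{u,v,e}$ for every $(u,v)\in E$ and every $P\in \mathcal P_{u,v}$. This dual has only $O(|E|^2)$ variables but exponentially many constraints, which is the natural setting for the ellipsoid method.

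Next, I would identify the separation problem for the dual as Restricted Shortest Path. Checking the packing constraints is trivial, so separation reduces to deciding, for each $(u,v)\in E$, whether $z_{u,v}\le \min_{P\in \mathcal P_{u,v}} \sum_{e\in P} y_{u,v,e}$; equivalently, to finding the minimum $y$-weight path from $u$ to $v$ whose $d$-length is at most the precomputed threshold $k\cdot d_G(u,v)$. This is exactly Restricted Shortest Path, which admits an FPTAS~\cite{LR01,Has92}.

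Applying the FPTAS with precision $\epsilon$ yields an approximate oracle that, for each pair, either returns a path $P^\star$ with $y$-weight strictly less than $z_{u,v}$ (a true violated constraint), or certifies that $z_{u,v}\le (1+\epsilon)\min_{P\in \mathcal P_{u,v}} \sum_{e\in P} y_{u,v,e}$. Hence the oracle separates a slightly inflated polytope $\tilde D\supseteq D$ whose optimum is at most $(1+\epsilon)$ times that of $D$, so running ellipsoid on $\tilde D$ optimizes the dual to within a factor of $1+\epsilon$. By LP duality this matches the primal optimum to within the same factor. To recover an actual primal feasible solution, I would collect the polynomially many paths $\mathcal P^\star$ returned as violated during the ellipsoid run (augmented, if needed, by one arbitrary stretch-$k$ path per pair to guarantee primal feasibility) and then solve the polynomial-size primal LP restricted to flow variables $\{f_P : P\in \mathcal P^\star\}$; by LP duality its value equals that of the correspondingly restricted dual, which is at most $(1+\epsilon)$ times the true primal optimum.

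The main obstacle is the standard bookkeeping that makes ellipsoid with an approximate oracle terminate in polynomial time: one must verify that, after suitable rescaling, the approximately feasible region has volume bounded away from zero and lies in a polynomially bounded ball. For LP~\eqref{LP:spanner1} this is routine, since the $x_e$ are naturally bounded in $[0,1]$ and the dual variables can be rescaled analogously. All remaining ingredients -- computing $d_G$ upfront via all-pairs shortest paths, invoking the RSP FPTAS, and extracting the primal solution from the set of visited constraints -- are straightforward.
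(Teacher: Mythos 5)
Your proposal is correct and follows essentially the same route as the paper's proof: dualize, identify the separation problem as Restricted Shortest Path, run the ellipsoid method with the RSP FPTAS as an approximate separation oracle, and recover a primal solution by solving the compact primal restricted to the polynomially many paths encountered. The only differences are cosmetic --- you frame the approximation via an inflated dual polytope rather than scaling the $z_{u,v}$ variables down by $1-\epsilon$, and your remark about augmenting the path set to guarantee feasibility of the restricted primal is a small point the paper glosses over.
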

\begin{proof}

The dual of LP~\eqref{LP:spanner1} is LP~\eqref{LP:spanner_dual}, which has a variable for every edge and a variable for every pair of edges.  The intuition is that the $y_{u,v}^e$ variables must form a ``fractional cut" of $(u,v)$ relative to stretch-$k$ paths.
\begin{equation} \label{LP:spanner_dual}
\framebox{ $
\begin{array}{lll}
  \max & \displaystyle \sum_{(u,v) \in E} z_{u,v} \\
  \mathrm{s.t.}
  & \displaystyle \sum_{(u,v) \in E} y_{u,v}^e \leq 1 & \forall e \in E \\
  & \displaystyle z_{u,v} - \sum_{e \in P} y_{u,v}^e \leq 0  & \forall (u,v)\in E,\ \forall P \in {\mathcal P}_{u,v} \\
  & \displaystyle z_{u,v} \geq 0 & \forall (u,v) \in E \\
  & \displaystyle y_{u,v}^e \geq 0 & \forall (u,v) \in E, \ \forall e \in E
\end{array}
$ }
\end{equation}

To construct a separation oracle for this LP, note there are only a
polynomial number ($|E|$) of constraints of the first type, so we can
just check them one by one. For constraints of the second type, note
that for every $u,v\in V$ the values $\{y_{u,v}^e\}_{e \in E}$ are
just a non-negative edge-weighting, and the constraint just require
all of the original stretch $k$ paths to have total length (under this
new weighting) of at least $z_{u,v}$.  So we get the following problem:
given two weightings of the same graph, find the shortest path under
the second weighting subject to having length at most $T>0$ under the
first weighting (for some threshold $T$).  If we could solve this we
would have a separation oracle for the dual.  Note that in the unit-length case stretch-$k$ paths correspond exactly to $k$-hop paths, so we can solve this problem exactly using Bellman-Ford.

For the general lengths setting, this problem has been considered in the literature under
the names ``Length Constrained Lightest Path'' and ``Restricted
Shortest Path''.  An FPTAS is known to exist \cite{LR01, Has92}, which
gives us an approximate separation oracle.  So by using Ellipsoid with this oracle we find a polynomial number of constraints such that the optimal solution violates all of the other constraints (which we did not include) by at most $1-\epsilon$, i.e.~there might be paths where $(1-\epsilon) z_{u,v} \leq \sum_{e \in P} y_{u,v}^e$.  So if we simply let $z'_{u,v} = (1-\epsilon) z_{u,v}$ we have a feasible solution for LP~\eqref{LP:spanner_dual} that is within $1-\epsilon$ of optimal.  So the optimum of LP~\eqref{LP:spanner_dual} is at least $1-\epsilon$ times the optimum of the compact dual (informally, changing to only a polynomial number of constraints did not affect the value of the optimal solution very much).  Thus by strong duality if we use solve a compact version of LP~\eqref{LP:spanner1} that has only the variables corresponding to the constraints found by Ellipsoid on the dual (of which there are only a polynomial number) we get a solution of value equal to the optimum of the compact dual, which is at most $1/(1-\epsilon)$ times the value of the actual dual (LP~\eqref{LP:spanner_dual}).
\end{proof}

\subsection{Fault Tolerant Relaxation} \label{sec:LP_FT}
There are two versions of the $r$-fault tolerant $k$-spanner problem, depending on whether we protect against edge faults or vertex faults. The idea in both cases is the same, so we shall focus on vertex faults: construct a subgraph $H$ of $G$ such that for every set $F$ of at most $r$ faulting vertices, $H \setminus F$ is a $k$-spanner of $G \setminus F$.  We can change LP~\eqref{LP:spanner1} to support this version by allowing a different set of flows $\{f_P^F\}$ for every such $F$, but using the same capacity variables $\{x_e\}$.  More formally, let $\mathcal{P}_{u,v}^F$ be the set of stretch-$k$ paths from $u$ to $v$ in $G \setminus F$ (where $F$ is a set of edges for the edge tolerant version or is a set of vertices for the vertex tolerant version).  For  $F \subseteq V$, let $E_F \subseteq E$ be the set of edges with at least one endpoint in $F$.  We will use the following relaxation for the vertex version (the edge version is analogous):
\begin{equation} \label{LP:FT_spanner}
\framebox{ $
\begin{array}{lll}
  \min & \displaystyle \sum_{e \in E} x_e
  \\
  \mathrm{s.t.}
  & \displaystyle \sum_{P \in \mathcal P_{u,v}^F:\ e \in P} f_P^F \leq
  x_e & \forall F \subseteq V : |F| \leq r,\
  \forall (u,v) \in E \setminus E_F,\
  \forall e \in E \setminus E_F
  \\
  & \displaystyle \sum_{P \in \mathcal P_{u,v}^F} f_P^F \geq 1
  & \forall F \subseteq V : |F| \leq r,\ \forall (u,v)\in E \setminus E_F
  \\
  & \displaystyle x_e \geq 0 & \forall  e \in E
  \\
  & \displaystyle f_P^F \geq 0 & \forall F \subseteq V : |F| \leq r, \ \forall (u,v) \in E \setminus E_F, \ \forall P \in \mathcal P_{u,v}^F
\end{array}
$ }
\end{equation}

It is easy to see that LP~\eqref{LP:FT_spanner} has $n^{O(r)}$ constraints.  Each possible fault set acts like an instance of the original spanner LP~\eqref{LP:spanner1}, except for the sharing of the capacity variables $\{x_e\}_{e \in E}$.  So when we take the dual we get a program with $n^{O(r)}$ variables, and if $r$ is constant the separation oracle we designed for the non-fault-tolerant version suffices to separate this LP as well.  When $r$ is not constant this technique does not work as the dual will have a superpolynomial number of variables.  Instead we will give a bicriteria algorithm, which in the unit-length case will find a fractional assignment to the $x_e$ variables of cost at most $\frac{1+\epsilon}{\epsilon}$ times larger than the cost of the best fractional solution, and that supports flows that satisfy the constraints for all $F$ of size at most $r / ((1+\epsilon) k)$.

\begin{theorem} \label{thm:FT_LP_solve}
 For any $\epsilon >  0$ there is a polynomial time algorithm that, given an instance of the unit-length directed $r$-fault-tolerant $k$-spanner problem, finds a set of fractional capacities $\{x_e\}_{e \in E}$ with the following two properties: 1) there exist flow variables that satisfy the flow and capacity constraints of LP~\eqref{LP:FT_spanner} for all fault sets of size at most $\frac{r}{(1+\epsilon)k}$, and 2) $\sum_{e \in E} x_e$ is at most $\frac{1+\epsilon}{\epsilon}$ times larger than the optimal solution to LP~\eqref{LP:FT_spanner}
\end{theorem}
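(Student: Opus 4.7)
The plan is to mimic the proof of Theorem~\ref{thm:LP_solve}---dualize, run ellipsoid, use Restricted Shortest Path as the approximate separation oracle---but applied to a polynomial-size surrogate LP, since LP~\eqref{LP:FT_spanner} itself has $n^{O(r)}$ constraints and its dual has $n^{O(r)}$ variables, which is not polynomial for non-constant $r$. Working directly with LP~\eqref{LP:FT_spanner} is hopeless, so the bicriteria loss must be bought by relaxing to something whose dual separation is still just RSP.

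First I would solve a scaled version of LP~\eqref{LP:spanner1} in which the demand is raised from $1$ to $D := \frac{1+\epsilon}{\epsilon}$: minimize $\sum_e x_e$ subject to $\sum_{P\ni e} f_P \leq x_e$ and $\sum_P f_P \geq D$ for each $(u,v)\in E$. Since raising the demand only rescales the $z_{u,v}$ variables in the dual, the same RSP-based separation oracle of Theorem~\ref{thm:LP_solve} applies verbatim, yielding in polynomial time capacities $\{x_e\}$ and a flow $\{f_P\}$ satisfying the constraints. Property (2) is then immediate: LP~\eqref{LP:spanner1} is the $F=\emptyset$ specialization of LP~\eqref{LP:FT_spanner}, hence a relaxation, so $\sum_e x_e = D\cdot OPT(\textrm{LP~\ref{LP:spanner1}}) \leq D\cdot OPT(\textrm{LP~\ref{LP:FT_spanner}}) = \frac{1+\epsilon}{\epsilon}\cdot OPT(\textrm{LP~\ref{LP:FT_spanner}})$.

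For property~(1), fix any $F$ with $|F|\leq r':=r/((1+\epsilon)k)$ and construct $\{f_P^F\}$ by restricting $\{f_P\}$ to paths that avoid $F$. The capacity constraints are preserved automatically, so the task reduces to showing that the surviving flow is at least $1$ for each $(u,v) \in E\setminus E_F$. The unit-length structure is crucial here: every stretch-$k$ path has at most $k$ edges and at most $k-1$ internal vertices, giving $\sum_w \phi(w) \leq (k-1)D$ where $\phi(w) := \sum_{P\ni w} f_P$. Combined with a length-bounded Menger-type rerouting argument (or, equivalently, with additional per-vertex-flow constraints in the surrogate LP), the flow blocked by $F$ is at most $|F|\cdot O(k) \leq D-1$, leaving the required $\geq 1$ unit of residual flow.

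The main obstacle is precisely this last step: without extra structure, the naive bound $\sum_{w\in F} \phi(w) \leq |F|\cdot \max_w \phi(w)$ allows $\max_w \phi(w)=D$ (all flow concentrated on a single vertex), which collapses the argument. The way around this is to augment the surrogate LP with polynomially many per-vertex-flow constraints $\phi(w)\leq c$ and show (i) these constraints still admit a polynomial-time RSP-based separation---so Theorem~\ref{thm:LP_solve}'s template still applies---and (ii) they do not inflate the LP optimum beyond the declared slack. Arranging the constants so that the demand amplification $D$, the per-vertex cap $c$, and the fault-size bound $r/((1+\epsilon)k)$ line up exactly into the stated bicriteria is the technical crux.
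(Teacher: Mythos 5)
Your proposal diverges substantially from the paper's proof, and the point you yourself flag as ``the technical crux'' is a genuine gap, not a routine detail. The paper does not try to build a single fault-robust flow at all: it projects the feasible region of LP~\eqref{LP:FT_spanner} onto the capacity variables $\{x_e\}$, runs ellipsoid on that (convex) projection, and builds the separation oracle by solving the {\sc Stretch-$k$ Interdiction} problem --- find the worst fault set together with a fractional cut --- via the LP relaxation of MIP~\eqref{MIP:interdiction} followed by threshold rounding ($z'_w=1$ iff $z_w\ge \frac{1}{(1+\epsilon)k}$, $y'_e=\frac{1+\epsilon}{\epsilon}y_e$). Both bicriteria losses in the theorem statement come precisely from that rounding step.

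The gap in your route is the claim that the (scaled) FT-feasible capacities support a single flow of value $D=\frac{1+\epsilon}{\epsilon}$ with per-vertex flow at most $c$, where you would need $c\le \frac{(1+\epsilon)k}{\epsilon r}$ for the deletion argument to close (your written bound ``$|F|\cdot O(k)\le D-1$'' does not hold for the stated parameters unless $r=O(1/\epsilon)$; the observation $\sum_w\phi(w)\le (k-1)D$ only controls the average, not the maximum). The integral ``length-bounded Menger-type rerouting'' you invoke is false in general --- for length-bounded connectivity the gap between vertex cuts and disjoint paths is unbounded in $k$ --- so the only viable version of your claim is the fractional one: by LP duality, the existence of the vertex-capped flow is equivalent to showing that every fractional vertex-plus-edge cut $(y,\mu)$ of the stretch-$k$ path system has $\sum_e x_e y_e + c\sum_w \mu_w$ large, and deducing this from FT-feasibility requires thresholding $\mu$ into an integral fault set of size $\le r$ and arguing the residual $y$ still cuts the surviving paths. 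That is exactly the rounding argument of the paper's interdiction lemma, so your approach, once completed, collapses into the paper's; as written it asserts the theorem's hardest step rather than proving it. (You would additionally need to bound the optimum of your augmented surrogate LP against $OPT$ of LP~\eqref{LP:FT_spanner} rather than of LP~\eqref{LP:spanner1}, which is the same unproven statement in another guise.)
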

\begin{proof}
When $r$ is super-constant there is a super-polynomial number of constraints in the LP, so we cannot solve it using earlier methods (when we transform to the dual we get a super-polynomial number of variables).  Instead of going through the dual we will stick with the primal and give a separation oracle.  However, since ellipsoid with a separation oracle takes time polynomial in the dimension (i.e.~the number of variables) we  need to transform the problem into one with a polynomial number of variables.  We do this in a simple way: we simply project the polytope down on the capacity variables $x_e$, of which there are only $O(m)$.  The objective function of LP~\eqref{LP:FT_spanner} uses only the $x_e$ variables, so optimizing over this projection is sufficient to optimize over the full LP.  And since this is a projection of a convex set it is itself convex, so if we can design a separation oracle the ellipsoid algorithm will run in polynomial time.

So what would a separation oracle for this projected polytope be?  Simply examining LP~\eqref{LP:FT_spanner} shows that a setting of the capacity variables $\{x_e\}_{e \in E}$ is not a valid solution if and only if there is some set of at most $r$ faults such that it is impossible to send $1$ unit of flow between all demands.  Slightly more formally, $\{x_e\}_{e \in E}$ is not a valid solution if and only if there is some fault set $F$ (of size at most $r$) and edge $(u,v) \in E$ such that:
 \begin{enumerate}
    \item $(u,v) \not\in F$ for the edge-fault case, or $u,v \not\in F$ for the vertex-fault case, and
    \item There is no way of sending one unit of flow along stretch-$k$ paths from $u$ to $v$ in $G\setminus F$ while respecting capacities $\{x_e\}$.
 \end{enumerate}

 By strong duality, the maximum flow that can be sent along stretch-$k$ $u-v$ paths is equal to the smallest fractional cut, where a fractional cut is an assignment of values $y_e$ to the edges such that $\sum_{e \in P} y_e \geq 1$ for all $P \in \mathcal P_{u,v}^F$.  The size of such a cut for a particular fault set $F$ is $\sum_{e \in E} x_e y_e$.  So for every fault set $F$ of size at most $r$, for every remaining demand $(u,v)$, for every fractional cut $\{y_e\}_{e \in E}$ relative to $F$ and to $(u,v)$, any feasible solution $\{x_e\}_{e \in E}$ has $\sum_{e \in E} y_e x_e \geq 1$.  These are the violated constraints that our separation oracle will find (or at least will approximately find).

So to construct a separation oracle, we want to find a set of faults $F$ and demand $(u,v)$ with the smallest fractional cut.  If the size of this cut is less than $1$, then we have found a separating hyperplane, and if there is no such set $F$ then the current capacities are feasible.  In order to solve this problem, which we will call {\sc Stretch-$k$ Interdiction}, we first write it as a mixed-integer program.  Since there are only a polynomial number of $(u,v)$ demands we can simply try them all, so our formulation is for some given $(u,v)$.  This formulation is for the vertex-fault version; the edge-fault version follows the same basic idea. Recall that the $x_e$'s are the capacity variables in the original problem, so in this context they are fixed constants and thus the objective function and the constraints are linear.  The intention of MIP~\eqref{MIP:interdiction} is for $z_w$ to represent whether vertex $w$ is part of the fault set and for the $\{y_e\}$ variables to represent a fractional cut of the remaining paths.

\begin{equation} \label{MIP:interdiction}
\framebox{ $
\begin{array}{lll}
  \min & \displaystyle \sum_{e \in E} x_e y_e
  \\
  \mathrm{s.t.}
  & \displaystyle \sum_{(a,b) \in P} (y_{(a,b)} + \frac12 z_a + \frac12 z_b) \geq 1 & \forall P \in \mathcal P_{u,v}
  \\
  & \displaystyle \sum_{w \in V} z_w \leq r
  \\
  & \displaystyle z_u = z_v = 0
  \\
  & \displaystyle z_w \in \{0,1\} & \forall w \in V
  \\
  & \displaystyle y_e \geq 0 & \forall e \in E
\end{array}
$ }
\end{equation}

\begin{claim}
MIP~\eqref{MIP:interdiction} is an exact formulation of {\sc Stretch-$k$ Interdiction}
\end{claim}
\begin{proof}
Note that there is a one-to-one correspondence between settings of the $z_w$ variables and possible fault sets.  For every setting of the $z_w$'s, the objective value is the minimum cost fractional cut (where we have to cut stretch-$k$ paths that do not hit any faults), which is exactly what we are trying to optimize.
\end{proof}

\begin{lemma}
There is a bicriteria approximation for {\sc Stretch-$k$ Interdiction} that uses at most $(1+\epsilon) k r$ faults (instead of $r$) and has cost at most $\frac{1+\epsilon}{\epsilon}$ times the best $r$-fault solution
\end{lemma}
\begin{proof}
In order to solve MIP~\eqref{MIP:interdiction} we relax the integrality constraints on the $z_w$ variables to $0 \leq z_w \leq 1$, giving us a linear program.  We can solve the resulting LP by constructing its own separation oracle: if we define the length of an edge $(a,b)$ to be $y_{(a,b)} + \frac12 z_a + \frac12 z_b$, is there a stretch-$k$ path with length less than $1$?  Since we only consider the fault-tolerant setting for the unit-length case we can actually solve this problem exactly using Bellman-Ford (since in this case stretch-$k$ is equivalent to $k$-hop).  So we can solve this LP in polynomial time.

Now we need to round the $z_w$ variables to integers.  We will use a very simple threshold rounding: if $z_w \geq \frac{1}{(1+\epsilon)k}$ then set $z'_w = 1$; otherwise set $z'_w = 0$.  Furthermore, set $y'_e = \frac{1+\epsilon}{\epsilon} y_e$.  Since any stretch-$k$ path is a $k$-hop path, if $\sum_{w \in P} z_w \geq \frac{1}{1+\epsilon}$ then $z_w \geq \frac{1}{(1+\epsilon) k}$ for some $w \in V$, and thus $z'_w$ covers $P$.  On the other hand, if $\sum_{w \in P} z_w < \frac{1}{1+\epsilon}$ then $\sum_{e \in P} y_e > \frac{\epsilon}{1+\epsilon}$, so $\sum_{e \in P} y'_e = \frac{1+\epsilon}{\epsilon} \sum_{e \in P} y_e \geq 1$ and the $y'$ variables cover $P$.  Thus $(z', y')$ is a valid solution to the MIP except that $\sum_{w \in V} z'_w \leq (1+\epsilon)k r$ instead of being at most $r$.  In other words, we have designed a $((1+\epsilon)k, \frac{1+\epsilon}{\epsilon})$-bicriteria approximation algorithm for MIP~\eqref{MIP:interdiction} and thus for {\sc Stretch-$k$ Interdiction}.
\end{proof}

By using this bicriteria approximation with original fault budget $r / (1+\epsilon)k$ instead of $r$, we will find a separating hyperplane (whose coefficients are the $\{y'_e\}$ variables) as long as there is some fault set $F$ of size at most $r / (1+\epsilon)k$ and demand $(u,v)$ for which the maximum stretch-$k$ flow (or equivalently the minimum fractional cut) is at most $\epsilon / (1+\epsilon)$.  So using this separation oracle with the Ellipsoid algorithm and then rounding the capacities we find up by $\frac{1+\epsilon}{\epsilon}$ gives us a bicriteria algorithm for LP~\eqref{LP:FT_spanner}, yielding the theorem.
\end{proof}

\section{Approximations for Directed $k$-Spanner} \label{sec:alg_main}

We will now design a $\tilde{O}(n^{2/3})$-approximation
algorithm for the directed $k$-spanner problem.  We first solve LP~\eqref{LP:spanner1} as detailed in Theorem~\ref{thm:LP_solve} to get a fractional solution $(x,f)$.  We then round this solution using Algorithm~\ref{alg:round}, which has two main components: a simple threshold rounding scheme together with a collection of shortest path arborescences.

\begin{algorithm}[]
\caption{Rounding Algorithm for Directed $k$-spanner}
\label{alg:round}
$E' \leftarrow \{e \in E : x_e \geq 1/(3n \ln n)^{2/3}\}$
\\
\For{$i \leftarrow 1$ \KwTo $(3n \ln n)^{2/3}$} {
Choose $v \in V$ uniformly at random
\\
$T_i^{in} \leftarrow$ shortest path in-arborescence rooted at $v$
\\
$T_i^{out} \leftarrow$ shortest path out-arborescence rooted at $v$
}
Output $E' \bigcup \left(\cup_{i=1}^{(3n \ln n)^{2/3}} (T_i^{in} \cup T_i^{out})\right)$
\end{algorithm}

To show that this algorithm gives a valid $k$-spanner, we begin with a lemma that characterizes edges that are satisfied by the thresholding.  For every $(u,v)$ in $E$, let $N_{u,v} \subseteq V$ be the set of vertices that lie on a path of stretch at most $k$ from $u$ to $v$ (i.e.~the set of vertices that are used by at least one path in $\mathcal{P}_{u,v}$).

\begin{lemma}\label{lem:spanner_path}
For any $(u,v) \in E$ there is a path $P \in \mathcal{P}_{u,v}$ with
the property that every edge $e \in P$ has $x_e \geq \frac{1}{|N_{u,v}|^2}$
\end{lemma}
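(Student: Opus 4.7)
The plan is to prove the contrapositive by a capacity-counting argument based on the LP constraints. Call an edge \emph{bad} if $x_e < 1/|N_{u,v}|^2$, and suppose for contradiction that \emph{every} path $P \in \mathcal{P}_{u,v}$ contains at least one bad edge.

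First I would restrict attention to the subgraph induced by $N_{u,v}$. By definition of $N_{u,v}$, any $P \in \mathcal{P}_{u,v}$ uses only vertices in this set, and consequently every edge carrying positive flow $f_P$ on paths in $\mathcal{P}_{u,v}$ has both endpoints in $N_{u,v}$. In particular, the relevant bad edges all lie in this induced subgraph, so there are at most $|N_{u,v}|^2$ of them (say, $|N_{u,v}|(|N_{u,v}|-1)$ for simple directed graphs, but $|N_{u,v}|^2$ is a safe bound).

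Next I would bound the total flow traversing bad edges. Summing the LP capacity constraints over the set $B$ of bad edges gives
\[
\sum_{e \in B} \sum_{\substack{P \in \mathcal{P}_{u,v} \\ e \in P}} f_P \;\le\; \sum_{e \in B} x_e \;<\; |N_{u,v}|^2 \cdot \frac{1}{|N_{u,v}|^2} \;=\; 1.
\]
On the other hand, since each $P \in \mathcal{P}_{u,v}$ is assumed to contain at least one bad edge, we have
\[
\sum_{P \in \mathcal{P}_{u,v}} f_P \;\le\; \sum_{e \in B} \sum_{\substack{P \in \mathcal{P}_{u,v} \\ e \in P}} f_P \;<\; 1,
\]
contradicting the LP flow-conservation constraint $\sum_{P \in \mathcal{P}_{u,v}} f_P \ge 1$. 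Hence some path $P \in \mathcal{P}_{u,v}$ has all of its edges satisfying $x_e \ge 1/|N_{u,v}|^2$.

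There is no serious obstacle here; the only thing to double-check is the combinatorial bound on the number of bad edges, which is why I would explicitly restrict to the induced subgraph on $N_{u,v}$ before invoking the counting argument. Once that reduction is made, the strict inequality $\sum_{e \in B} x_e < 1$ pops out immediately from the definition of ``bad'' and contradicts the demand constraint.
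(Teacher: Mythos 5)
Your proof is correct and follows essentially the same route as the paper: both arguments observe that the ``bad'' edges (those with $x_e < 1/|N_{u,v}|^2$) would form a cut against all paths in $\mathcal{P}_{u,v}$, that there are fewer than $|N_{u,v}|^2$ such edges since both endpoints lie in $N_{u,v}$, and that this contradicts the requirement of routing one unit of flow. Your version of summing the LP capacity constraints directly over the bad edges is a slightly more explicit rendering of the paper's appeal to the cut having total capacity at least one, but it is the same argument.
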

\begin{proof}
  Suppose this is false for some $(u,v)$.  Let $B \subseteq N_{u,v}
  \times N_{u,v}$ be the set of edges with $x_e < 1/|N_{u,v}|^2$.
  Then every path $P \in \mathcal P_{u,v}$ goes through at least one
  edge in $B$, so these edges form a cut between $u$ and $v$ relative
  to the paths in $\mathcal P_{u,v}$.  Since we have a valid LP
  solution, we know that at least one unit of flow is sent from $u$ to
  $v$ using paths in $\mathcal P_{u,v}$.  This means that the number
  of edges in $B$ must be at least $|N_{u,v}|^2$.  But this is a
  contradiction: every edge in $B$ has both endpoints in $N_{u,v}$, so
  there are at most ${|N{u,v}| \choose 2} < |N_{u,v}|^2$ of them.
\end{proof}

So if $|N_{u,v}|$ is small, Lemma~\ref{lem:spanner_path} implies that
there is some stretch $k$ path with the property that every edge is
assigned a large capacity.  On the other hand, if $|N_{u,v}|$ is large then there are many nodes that are
on stretch $k$ paths, so we should be able to find such a path by
picking nodes randomly.  This is formalized in the following lemma:

\begin{lemma}\label{lem:spanner_sample}
  If we sample at least $\frac{3n\ln n}{|N_{u,v}|}$ vertices
  independently and uniformly at random, then with probability at
  least $1-1/n^3$ at least one sampled vertex will be in $N_{u,v}$
\end{lemma}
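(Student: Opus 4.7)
The plan is to give a short direct calculation using independence of the samples and the standard inequality $1-x \le e^{-x}$. Let $s = \lceil 3n \ln n / |N_{u,v}| \rceil$ be the number of samples. Since each sampled vertex is independently and uniformly drawn from $V$, the probability that a single sample lies outside $N_{u,v}$ is exactly $1 - |N_{u,v}|/n$, and by independence the probability that \emph{all} $s$ samples lie outside $N_{u,v}$ is
\[
 \left(1 - \frac{|N_{u,v}|}{n}\right)^{s}.
\]

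I would then apply $1 - x \le e^{-x}$ (valid for all real $x$) to bound this quantity by
\[
 \exp\!\left(-\frac{|N_{u,v}|}{n}\cdot s\right) \;\le\; \exp\!\left(-\frac{|N_{u,v}|}{n}\cdot \frac{3n\ln n}{|N_{u,v}|}\right) \;=\; e^{-3\ln n} \;=\; n^{-3}.
\]
Thus the probability that at least one sampled vertex lies in $N_{u,v}$ is at least $1 - 1/n^3$, which is exactly the claim.

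There is essentially no obstacle here; the only minor point to note is that the bound is meaningful only when $|N_{u,v}| \ge 1$, which is automatic because $u,v$ themselves lie on the trivial stretch-$k$ path from $u$ to $v$ (the single edge $(u,v) \in E$), so $|N_{u,v}| \ge 2$ and the number of samples $s$ is well defined. The argument is the standard coupon-collector-style calculation that one expects for a sampling lemma of this form, and fractional issues with $s$ can be absorbed by taking the ceiling without affecting the bound.
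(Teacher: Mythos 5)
Your proof is correct and is essentially identical to the paper's: both bound the probability that all samples miss $N_{u,v}$ by $\bigl(1-|N_{u,v}|/n\bigr)^{3n\ln n/|N_{u,v}|} \le e^{-3\ln n} = n^{-3}$ using independence and $1-x\le e^{-x}$. The extra remark about $|N_{u,v}|\ge 2$ is a harmless clarification.
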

\begin{proof}
  The probability that no sampled vertex is in $N_{u,v}$ is at most $\left(1-\frac{|N_{u,v}|}{n}\right)^{\frac{3n\ln n}{|N_{u,v}|}} \leq e^{-3\ln n} = 1/n^3$ and thus the probability that at least one sampled vertex is in
  $N_{u,v}$ is at least $1-1/n^3$
\end{proof}

\begin{theorem}\label{thm:spanner_approx}
  There is a polynomial time algorithm that with high probability returns a directed $k$-spanner of size at most $O((n \ln n)^{2/3})$ times the smallest directed $k$-spanner, for any $k \geq 1$.
\end{theorem}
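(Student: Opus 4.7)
The plan is to prove both feasibility (the output is a valid $k$-spanner with high probability) and the bound on size. The key design choice in Algorithm~\ref{alg:round} is that the threshold $1/(3n\ln n)^{2/3}$ and the number of samples $(3n\ln n)^{2/3}$ are balanced exactly at the breakpoint $|N_{u,v}| = (3n\ln n)^{1/3}$, so Lemmas~\ref{lem:spanner_path} and \ref{lem:spanner_sample} together cover every edge.

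For feasibility, fix an edge $(u,v)\in E$ and split on $|N_{u,v}|$. If $|N_{u,v}| \le (3n\ln n)^{1/3}$, then $1/|N_{u,v}|^2 \ge 1/(3n\ln n)^{2/3}$, so Lemma~\ref{lem:spanner_path} guarantees a stretch-$k$ path from $u$ to $v$ whose every edge passes the threshold and therefore lies in $E'$. Otherwise $|N_{u,v}| > (3n\ln n)^{1/3}$, so $(3n\ln n)^{2/3} \ge 3n\ln n/|N_{u,v}|$, and Lemma~\ref{lem:spanner_sample} implies that with probability at least $1-1/n^3$, some sampled vertex $w$ lies in $N_{u,v}$. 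In this case the in-arborescence $T_i^{in}$ contains a shortest $u\to w$ path in $G$ and the out-arborescence $T_i^{out}$ contains a shortest $w\to v$ path in $G$. Concatenating gives a $u\to v$ walk in the output graph of total length $d_G(u,w)+d_G(w,v)$, which by definition of $N_{u,v}$ is at most $k\cdot d_G(u,v)$. A union bound over the at most $n^2$ edges of $G$ handles all edges simultaneously with probability at least $1-1/n$.

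For the cost bound, let $\mathrm{OPT}$ be the size of the smallest $k$-spanner and $\mathrm{LP}\le\mathrm{OPT}$ be the value of the LP solution produced by Theorem~\ref{thm:LP_solve} (losing only a $1+\epsilon$ factor). Threshold rounding contributes
\[
|E'| \;\le\; (3n\ln n)^{2/3} \sum_{e\in E} x_e \;\le\; (3n\ln n)^{2/3}\cdot \mathrm{LP}.
\]
Each of the $2(3n\ln n)^{2/3}$ arborescences contains at most $n-1$ edges, contributing a total of $O(n\cdot (n\ln n)^{2/3})$ edges. Because $G$ is strongly connected, any valid $k$-spanner must itself be strongly connected and hence have at least $n$ edges, so $\mathrm{OPT}\ge n$. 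Combining the two contributions gives a bound of $O((n\ln n)^{2/3})\cdot \mathrm{OPT}$, as required.

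The main obstacle is the arborescence bound of the second case: one must verify that routing $u\to w$ through $T_i^{in}$ and $w\to v$ through $T_i^{out}$ uses only shortest paths in $G$ (not in the output subgraph), so that the detour length is governed by $N_{u,v}$'s definition. The cost analysis is then routine, with the subtle point being that we must invoke strong connectivity to absorb the $O(n\cdot (n\ln n)^{2/3})$ arborescence term into the approximation factor against $\mathrm{OPT}$.
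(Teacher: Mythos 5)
Your proposal is correct and follows essentially the same argument as the paper: the same case split on $|N_{u,v}|$ at the breakpoint $(3n\ln n)^{1/3}$ using Lemmas~\ref{lem:spanner_path} and \ref{lem:spanner_sample}, the same union bound for feasibility, and the same cost accounting charging the arborescences against the trivial lower bound on OPT. The only (harmless) cosmetic difference is that you invoke $\mathrm{OPT}\ge n$ from strong connectivity where the paper uses the lower bound $n-1$.
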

\begin{proof}
  The algorithm is simply to solve LP~\eqref{LP:spanner1} using Theorem~\ref{thm:LP_solve}, and then round the solution using Algorithm~\ref{alg:round}.  We first prove that it results in a valid spanner with high probability.  Consider some
  edge $(u,v) \in E$.  If $|N_{u,v}| \leq (3 n \ln n)^{1/3}$, then
  Lemma~\ref{lem:spanner_path} implies that there is some stretch $k$
  path from $u$ to $v$ using edges contained in $E'$ and thus in the spanner.  On the other hand, if
  $|N_{u,v}| > (3 n \ln n)^{1/3}$ then Lemma~\ref{lem:spanner_sample}
  implies that with probability at least $1-1/n^3$ we will have
  sampled some vertex in $N_{u,v}$.  Suppose we sample $w \in
  N_{u,v}$ on the $i$th iteration.  By the definition of $N_{u,v}$ we know that $w$ is on
  some path from $u$ to $v$ with stretch at most $k$, and thus the
  length of the shortest path from $u$ to $w$ plus the length of the
  shortest path from $w$ to $v$ is at most $k \cdot d_G(u,v)$.  These paths are contained in $T_i^{in} \cup T_i^{out}$, so the spanner will include both of
  these shortest paths and thus will include a path from $u$ to $v$
  with stretch at most $k$.  Taking a union bound over all $(u,v)$ completes the proof that the returned subgraph is a $k$-spanner.

  To prove that it is a $O((n \ln n)^{2/3})$-approximation we will
  show that each of the two steps costs at most $O((n \ln n)^{2/3}) \times OPT$.
  This is obvious for the LP rounding step: an edge $e$ is in $E'$ only if $x_e \geq 1/(3 n\ln n)^{2/3}$, so $|E'|$ is at most  $O((n \ln n)^{2/3})$ times the LP cost.  To
  show that the second step does not add many edges, note that every iteration adds at most $2(n-1)$ edges, and that $n-1$ is a trivial lower bound on $OPT$ (since we are assuming the underlying graph is connected; if it is not connected then it is easy to modify this analysis to still hold).  Thus the total
  cost of all the arborescences is at most $2(3 n \ln n)^{2/3} \times OPT$
\end{proof}

\subsection{Extension to $r$-Fault-Tolerant Version} \label{sec:FT_main}

In order to adapt the rounding scheme of Algorithm~\ref{alg:round} to the fault-tolerant case we need to show how to modify the threshold rounding and the arborescence sampling.  It is simple to see that Lemma~\ref{lem:spanner_path} still holds for every fault set, so the threshold rounding will still work (although we will change the threshold).  But the arborescence rounding must be changed to allow for faults.  The intuition behind the change comes from the technique of \emph{color-coding}~\cite{AYZ95}: before randomly sampling the root of an arborescence, independently fail each element (either edges in the edge-tolerant version or vertices in the vertex-tolerant version) with some probability $p$.
We call this technique \emph{failure sampling}.  We then randomly sample a root and include its shortest-path in- and out-arborescences in the resulting subgraph.  By setting $p$, the number of arborescences sampled, and the threshold of the rounding appropriately, we get the following theorem.  We say that an algorithm is an $(\alpha, \beta)$-approximation for the $r$-fault-tolerant directed $k$-spanner problem if it returns a $\alpha r$-fault-tolerant $k$-spanner of size at most $\beta$ times the smallest $k$-spanner, and an algorithm is a \emph{true} $\beta$-approximation if it is a $(1,\beta)$-approximation.

\begin{theorem} \label{thm:FT_approx}
For any constant $\epsilon > 0$ there is a polynomial-time algorithm that is a \newline $\left(\frac{1}{(1+\epsilon)k}, O\left(\left(\frac{(1+\epsilon)r(k+r)^{k+r} n \ln n}{\epsilon r^r k^k}\right)^{2/3}\right)\right)$-approximation for the unit-length $r$-fault-tolerant directed $k$-spanner problem. There is also a true approximation algorithm with the same approximation ratio that takes $n^{O(r)}$ time.
\end{theorem}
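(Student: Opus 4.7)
My plan is to mirror the two-phase structure of Algorithm~\ref{alg:round} but adapt each phase to tolerate faults. First, I would solve LP~\eqref{LP:FT_spanner}: when $r$ is constant the dual has $n^{O(r)}$ variables and its constraints decompose over fault sets, so the Restricted Shortest Path separation oracle from the proof of Theorem~\ref{thm:LP_solve} applies verbatim, giving an exact LP solution in $n^{O(r)}$ time and hence a true approximation; for general $r$ I would invoke Theorem~\ref{thm:FT_LP_solve} to obtain capacities $\{x_e\}$ of cost at most $\frac{1+\epsilon}{\epsilon}$ times the $r$-fault LP optimum that support flows for every fault set of size up to $r'=r/((1+\epsilon)k)$. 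I would then round in two phases: (i) include every edge with $x_e \ge 1/T^2$, for a threshold $T$ to be fixed below; and (ii) repeat $N$ times the following \emph{failure sampling} step --- sample a root $v\in V$ uniformly at random, independently mark each vertex of $V\setminus\{v\}$ as ``failed'' with probability $p$, and add both the shortest-path in-arborescence and out-arborescence rooted at $v$ computed in the surviving subgraph to the output.

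For correctness, let $N^F_{s,t}$ denote the set of vertices lying on some stretch-$k$ path from $s$ to $t$ in $G\setminus F$. An immediate generalization of Lemma~\ref{lem:spanner_path}, applied to the $F$-specific flow variables $\{f^F_P\}$, shows that for every $|F|\le r'$ and every $(s,t)\in E\setminus E_F$ there is a stretch-$k$ $s$-$t$ path in $G\setminus F$ whose edges all have $x_e \ge 1/|N^F_{s,t}|^2$, so every such pair with $|N^F_{s,t}|\le T$ is already handled by the threshold step. For a pair with $|N^F_{s,t}|>T$, I would argue that a single failure-sampling iteration succeeds whenever (a) every vertex of $F$ is marked failed, (b) the random root $v$ lies in $N^F_{s,t}$, and (c) the at most $k$ non-root vertices of some fixed stretch-$k$ $s$-$t$ path through $v$ in $G\setminus F$ are not failed. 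Since $F$ and the path are disjoint vertex sets and root selection is independent of failure marking, events (a)--(c) are independent, giving per-iteration success probability at least $\tfrac{|N^F_{s,t}|}{n}\cdot p^{r'}\cdot(1-p)^{k}$ (summing over the disjoint ``$v$ chosen'' events for $v\in N^F_{s,t}$). Conditional on success, the path is preserved in the surviving subgraph, so the in/out shortest-path arborescences at $v$ certify a stretch-$k$ path from $s$ to $t$ in $G\setminus F$.

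The remaining task, which I expect to be the main technical obstacle, is a three-way balance of $p$, $T$, and $N$. Choosing $p=r/(k+r)$ makes $p^{r'}(1-p)^{k}$ comparable to $r^r k^k/(k+r)^{k+r}$ up to factors that can be absorbed by $\epsilon$, and then taking $N$ proportional to $\tfrac{(k+r)^{k+r}\,n\,r'\ln n}{r^r k^k\,T}$ suffices to union-bound over the at most $n^{r'+2}$ relevant $(F,(s,t))$ pairs. The threshold phase contributes at most $T^2\cdot\tfrac{1+\epsilon}{\epsilon}\cdot OPT$ edges, and each failure-sampling iteration adds at most $2(n-1)\le 2\cdot OPT$ edges (using $OPT\ge n-1$ by connectivity), so the total size is $\tfrac{1+\epsilon}{\epsilon}\cdot O(T^2+N)\cdot OPT$. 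Setting $T^2\approx N$ and substituting $r'=r/((1+\epsilon)k)$ yields $T=\Theta\bigl(((1+\epsilon)\,r\,(k+r)^{k+r}\,n\ln n/(\epsilon\,r^r k^k))^{1/3}\bigr)$, and the advertised approximation ratio follows. The edge-fault variant is obtained by replacing vertex faults with edge faults in both the LP-level argument and the failure-sampling step, with an essentially identical analysis.
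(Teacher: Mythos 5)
Your proposal matches the paper's proof essentially step for step: the same bicriteria LP solution via Theorem~\ref{thm:FT_LP_solve} (or exact solution in $n^{O(r)}$ time for constant $r$), the fault-set generalization of Lemma~\ref{lem:spanner_path} for the threshold phase, and failure sampling with $p=r/(k+r)$ followed by a union bound over the $n^{O(r)}$ fault-set/demand pairs and a balance of the threshold against the number of sampling rounds. The argument is correct and no different in substance from the paper's.
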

\begin{proof}
Suppose that we have a feasible solution for LP~\eqref{LP:FT_spanner} (or at least an approximate solution from Theorem~\ref{thm:FT_LP_solve}).  We show how to round a fractional solution into an integer solution, assuming that all lengths are $1$.  Our rounding is basically the same as for the non-fault tolerant version.  In particular, Lemma~\ref{lem:spanner_path} still holds for every fault set.  So we can, as before, set a threshold value $t$ and round up any edge with $x_e \geq 1/t$.  The only difference comes in the random sampling step: in the non-fault-tolerant version, it sufficed to randomly pick centers of shortest path in- and out-arborescences.  But in the fault-tolerant setting that is no longer sufficient; the paths we construct must suffice even after failures, which simple shortest paths obviously will not.  So we will add an extra step inspired by \emph{color-coding}~\cite{AYZ95}: before randomly sampling the root of an arborescence, independently fail each element (either edges in the edge-tolerant version or vertices in the vertex-tolerant version) with some probability $p$.  We call this technique \emph{failure sampling}.  We then randomly sample the root of shortest-path in- and out-arborescences in the resulting subgraph.

Consider some fault set $F$ and some edge $(u,v)$ that still survives in $G \setminus F$.  Define $N_F(u,v)$ as in the no-fault setting: a vertex $x$ is in $N_F(u,v)$ if $x$ is on some stretch-$k$ $u-v$ path in $G \setminus F$.  Since Lemma~\ref{lem:spanner_path} still holds, if $|N(u,v)| \leq \sqrt{t}$ then the threshold rounding satisfies the demand.  So we assume that $|N_F(u,v)| > \sqrt{t}$ and analyze the probability that a single round of the random sampling will satisfy the demand.  We will then perform the number of rounds necessary to be able to take a union bound over all possible $F$ and $(u,v)$.

Since we are assuming all edge lengths are $1$, a stretch-$k$ path is the same as a $k$-hop path.  A sufficient condition for the sampling to succeed for $F$ and $(u,v)$ is for the arborescence root to be a vertex in $N_F(u,v)$, everything in $F$ to be killed by the failure sampling, and nothing from the $k$-hop path containing the root to be killed (note that such a path must exist by the definition of $N_F(u,v)$).  Given that the the root is selected to be in $N_F(u,v)$, the probability that the particular $k$-hop $u-v$ path containing the root is all preserved by the failure sampling is $(1-p)^k$.  And clearly the probability that everything from $F$ is killed by the failure sampling is $p^k$, and is independent of the other two events.  So the probability that all three events happen is $p^r \times \frac{|N_F(u,v)|}{n} \times (1-p)^k \geq p^{r} (1-p)^k \frac{\sqrt{t}}{n}$.  The number of possible failure sets $F$ and demands $(u,v)$ is at most ${m \choose r} \times {n \choose 2} \leq n^{2r + 2}$ in the edge-failure setting; for vertex failures it is at most $n^{r+2}$.  Let $\ell$ be the number of rounds for which we repeat the random sampling.  Then in order to succeed on all constraints with probability at least $1/2$, we want
\begin{equation*}
\left(1 - \frac{p^{r} (1-p)^k \sqrt{t}}{n}\right)^{\ell} \leq \frac{1}{2n^{2r+2}}
\end{equation*}

Setting $p = \frac{r}{k+r}$ and solving for $\ell$, we get that it is sufficient to set
\begin{equation*}
\ell = \frac{2(2r+2)(k+r)^{k+r} n \ln n}{r^r k^k \sqrt{t}}
\end{equation*}

As in the non-fault-tolerant case, we balance out the cost of the sampling ($\ell$) with the cost of the threshold rounding ($t$) to get a total approximation of
\begin{equation*}
O\left(\left(\frac{r(k+r)^{k+r} n \ln n}{r^r k^k}\right)^{2/3}\right)
\end{equation*}

If $r$ is constant, then this rounding combined with our ability to actually solve LP~\eqref{LP:FT_spanner} gives us the theorem for the $r=O(1)$ case.  If $r$ is not constant then we need to use Theorem~\ref{thm:FT_LP_solve} before the rounding procedure, giving us the claimed bicriteria approximation.\end{proof}

\subsection{Integrality Gap} \label{sec:integrality_gap}
We now complement our approximation algorithm by proving a nearly matching integrality gap.  We do this by a reduction from the {\sc Min-Rep} problem.  In {\sc Min-Rep} we are given a bipartite graph $G = (U,V,E)$ together with a partition of $U$ and $V$ into groups $U_1, U_2 \dots, U_{p}$ and $V_1, V_2 ,\dots V_{p}$.  We say that there is a \emph{superedge} between two groups $(U_i, V_j)$ if there is some $u \in U_i$ and some $v \in V_j$ such that $\{u,v\} \in E$.  The goal is to find a subset $X \subseteq U \cup V$ of as few vertices as possible such that for every pair of groups $(U_i, V_j)$ with a superedge there is some $u \in U_i \cap X$ and $v \in V_j \cap X$ such that $\{u,v\} \in E$.

Elkin and Peleg~\cite{EP07} proved hardness for directed $k$-spanner by using a reduction from {\sc Min-Rep}, and we will use their reduction to prove an integrality gap.  But instead of reducing from generic {\sc Min-Rep} instances as in a hardness proof, we will only apply the reduction to instances of {\sc Min-Rep} in which every superedge actually corresponds to a matching between vertices, i.e.~if $(U_i, V_j)$ is a superedge then there is a matching between $U_i$ and $V_j$.  The interested reader might note that these are basically instances of the \emph{Unique Games Problem}~\cite{Khot02}.

We first give a lemma that was proved implicitly by Charikar, Hajiaghayi, and Karloff~\cite{CHK09}:

\begin{lemma} \label{lem:CHK_UG_instance}
For any constant $\epsilon > 0$, there are instances of {\sc Min-Rep} with the following properties:
\begin{enumerate}
\item Every group has size $n^{\frac23 - \epsilon}$
\item $OPT \geq n^{\frac23 - \epsilon}$
\item There is a matching between every $U_i$ and $V_j$
\end{enumerate}
\end{lemma}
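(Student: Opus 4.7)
My plan is to build the instances probabilistically, along the lines of~\cite{CHK09}. I would set the per-side number of groups to $p = \lfloor n^{1/3+\epsilon}/2 \rfloor$ and the group size to $s = n^{2/3-\epsilon}$, so that $|U|+|V| = 2ps = n$; declare every pair $(U_i, V_j)$ to be a superedge; and for each such pair independently pick a uniformly random perfect matching between $U_i$ and $V_j$. Properties~(1) and~(3) are immediate from this construction, so all the work is in establishing property~(2).

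For property~(2), the approach is to fix an arbitrary candidate cover $X \subseteq U \cup V$ with $|X|<s$, show that the probability (over the random matchings) that $X$ covers every superedge is exponentially small, and then union bound over all such $X$. Writing $x_i = |X \cap U_i|$ and $y_j = |X \cap V_j|$, so that $\sum_i x_i + \sum_j y_j < s$, I would first observe that on any single superedge $(i,j)$ the $x_i$ chosen $U_i$-vertices are matched to a uniformly random size-$x_i$ subset of $V_j$, which meets $X \cap V_j$ with probability at most $x_i y_j / s$. Letting $Z_{ij}$ be the indicator that the matching on $(i,j)$ is \emph{not} covered by $X$, the $Z_{ij}$'s are mutually independent (different superedges use independent matchings), and using $(\sum_i x_i)(\sum_j y_j) \leq s^2/4$ the expected sum satisfies $\mathbb{E}[\sum_{i,j} Z_{ij}] \geq p^2 - s/4 = \Omega(p^2)$. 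A standard Chernoff bound then gives $\Pr[\sum_{i,j} Z_{ij} = 0] \leq \exp(-\Omega(p^2))$.

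To finish, I would union bound over the at most $\exp(O(s \log n))$ choices of $X$ with $|X| < s$, noting that $p^2 = \Theta(n^{2/3 + 2\epsilon})$ dominates $s \log n = O(n^{2/3 - \epsilon}\log n)$ by an $n^{\Theta(\epsilon)}$ factor. This produces an instance in the support of the distribution in which no set of size less than $s$ covers all superedges, so $OPT \geq s = n^{2/3 - \epsilon}$ as required. The step I expect to be the main obstacle is exactly this parameter balance: the Chernoff tail has to beat the $\exp(s \log n)$ enumeration of candidate covers, and it is precisely this requirement that forces the construction to build in a slack of $p^2/s = n^{\Theta(\epsilon)}$ between the number of superedges and the cover budget. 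Once the parameters are chosen as above, the remaining steps are a routine probabilistic computation.
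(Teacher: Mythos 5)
The paper gives no proof of this lemma at all; it is attributed to Charikar, Hajiaghayi, and Karloff as an implicit consequence of their integrality-gap construction, which is precisely the random-matchings construction you describe. Your self-contained argument is correct: the per-superedge coverage probability bound $x_i y_j/s$, the independence of the $Z_{ij}$ across superedges, and the union bound over the $\exp(O(s\log n))$ candidate covers against the $\exp(-\Omega(p^2))$ tail all check out, and you correctly identify that the whole construction hinges on the slack $p^2/(s\log n)=n^{\Theta(\epsilon)}/\log n$. So your proposal supplies a valid proof of a statement the paper only cites.
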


We now use this lemma to prove the main integrality gap theorem by applying the Elkin and Peleg reduction~\cite{EP07} to instances from Lemma~\ref{lem:CHK_UG_instance} and showing there it has a small fractional solution.

\begin{theorem} \label{thm:unit_integrality_gap}
The integrality gap of LP~\eqref{LP:spanner1} on the unit-length directed $k$-spanner problem is $\Omega(\frac1k n^{\frac13 - \epsilon})$ for any constant $\epsilon > 0$.
\end{theorem}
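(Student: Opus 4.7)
The plan is to instantiate the Elkin--Peleg reduction \cite{EP07} from {\sc Min-Rep} to directed $k$-spanner on the structured {\sc Min-Rep} instances provided by Lemma~\ref{lem:CHK_UG_instance}, and then exploit the matching property of those instances to construct an LP solution whose value is polynomially smaller than the integral optimum. Let $\mathcal{I}$ be a {\sc Min-Rep} instance from Lemma~\ref{lem:CHK_UG_instance} with group size $g = n^{2/3-\epsilon}$ and $\mathrm{OPT}(\mathcal{I}) \geq g$, and let $H$ be the resulting unit-length directed $k$-spanner instance on $N$ vertices. The Elkin--Peleg reduction is ``gap-preserving'' in the sense that any integral $k$-spanner of $H$ induces a {\sc Min-Rep} solution, so the integral optimum of $H$ is at least $\mathrm{OPT}(\mathcal{I}) = \tilde\Omega(g)$, and the number of vertices $N$ in $H$ is polynomially related to the size of $\mathcal{I}$, so that $g = \tilde\Omega(N^{2/3-\epsilon'})$ after rebalancing the parameters (for a slightly worse but still arbitrary $\epsilon' > 0$).

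The heart of the proof is to exhibit a fractional LP solution $(x,f)$ of value $\tilde O(N/(kg))$. For every edge $(u,v)$ of $H$ that encodes a superedge-demand of $\mathcal{I}$, I would route one unit of $u$-to-$v$ flow by \emph{spreading it uniformly over the $g$ matching-based stretch-$k$ paths} guaranteed by the matching property: the flow decomposes into $g$ paths, each with value $1/g$, and each path uses one matching edge together with a handful of ``connector'' edges from the reduction gadget. The key point is that because the underlying structure is a matching, each matching edge is used by flows corresponding to only $O(1)$ distinct demand pairs, so the aggregated $x_e$ values stay at $O(1/g)$. Setting $x_e$ equal to the total load each edge receives then gives a feasible $(x,f)$, and summing over all edges yields $\sum_e x_e = \tilde O(N/g) \cdot (1/k)$, where the extra $1/k$ factor arises because the Elkin--Peleg gadget blows up edge counts by $\Theta(k)$ relative to the matching edges that carry the bulk of the flow.

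Combining the two bounds, the integrality gap is at least
\[
\frac{\mathrm{OPT}(H)}{\mathrm{LP}(H)} \;\geq\; \frac{\tilde\Omega(g)}{\tilde O(N/(kg))} \;=\; \tilde\Omega\!\left(\frac{k g^2}{N}\right) \;=\; \tilde\Omega\!\left(\tfrac{1}{k}\, N^{1/3 - \epsilon}\right),
\]
after absorbing logarithmic factors and rebalancing $\epsilon$. Taking $\epsilon > 0$ arbitrary (at the cost of a worse constant in the reduction) gives the stated bound. The main obstacle is the bookkeeping around the Elkin--Peleg gadget: one must verify that (i) the matching paths really are stretch-$k$ paths in $H$ (not just short paths in $\mathcal{I}$) so they lie in the relevant $\mathcal{P}_{u,v}$, (ii) no edge of $H$ outside the matching portion accumulates load that dominates the $\tilde O(N/(kg))$ bound, and (iii) the identification $g = \tilde\Omega(N^{2/3-\epsilon'})$ survives the polynomial relation between $|\mathcal{I}|$ and $N$. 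The matching property from Lemma~\ref{lem:CHK_UG_instance} is essential for (ii), since in a generic {\sc Min-Rep} instance a single vertex could participate in many superedges and an even spreading strategy would blow up some $x_e$.
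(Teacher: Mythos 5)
Your high-level architecture matches the paper's (Elkin--Peleg reduction applied to the matching-structured {\sc Min-Rep} instances of Lemma~\ref{lem:CHK_UG_instance}, followed by a cheap fractional solution), but the quantitative skeleton of your argument is broken, and the numbers you propose cannot be achieved. First, your claimed LP value of $\tilde O(N/(kg))$ is impossible: the Elkin--Peleg gadget attaches, for each group, $x$ directed paths (``tails'') of length $(k-1)/2$, and these tail edges alone number $\Theta(N)$. Each intermediate tail edge $(s_{i,j}^p, s_{i,j+1}^p)$ is itself a demand whose \emph{only} stretch-$k$ path is the edge itself (the intermediate tail vertices have out-degree one), so any feasible LP solution has $x_e \geq 1$ on all of them and hence $\mathrm{LP} = \Theta(N)$. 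The paper's fractional solution has value $O(N)$ for exactly this reason, and the savings come entirely from putting capacity $O(1/g)$ on the dense parts ($E''$, $E_C$, $E_U$) and capacity $0$ on the superedge-connection edges $E_I$, whose demands are routed through the tails and across the matchings. Second, your integral lower bound $\mathrm{OPT}(H) \geq \tilde\Omega(g)$ is far too weak: paired with $\mathrm{LP} = \Theta(N)$ it gives a ``gap'' of $N^{-1/3-\epsilon} < 1$. The actual Elkin--Peleg bound is $\mathrm{OPT}(H) = \Omega(x \cdot \mathrm{OPT}_{MR}) = \Omega(g^2/k)$, because each of the $x = \Theta(g/k)$ parallel copies of the tail structure independently forces the spanner to buy a full {\sc Min-Rep} cover. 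This is also the true source of the $1/k$ loss --- each copy costs $\Theta(k)$ vertices for its tails, so the vertex budget caps $x$ at $\Theta(g/k)$ --- not an ``edge-count blowup by $\Theta(k)$'' as you suggest.

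Your final arithmetic is also internally inconsistent: with $g = N^{2/3-\epsilon'}$, your expression $\tilde\Omega(kg^2/N)$ evaluates to $\tilde\Omega(k\, N^{1/3-2\epsilon'})$, with $k$ in the numerator, which does not equal the claimed $\tilde\Omega(\frac1k N^{1/3-\epsilon})$. A smaller point: the LP~\eqref{LP:spanner1} imposes the capacity constraint \emph{separately for each demand pair} ($x_e \geq \sum_{P \in \mathcal P_{u,v} : e \in P} f_P$ for each fixed $(u,v)$), so there is no aggregation of load across distinct demands; your worry that ``each matching edge is used by only $O(1)$ distinct demand pairs'' is therefore not the relevant consideration (in fact each matching edge carries $1/g$ flow for each of roughly $g + x$ distinct demands, which is fine precisely because the constraints are per-demand). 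To repair the proof you need the correct accounting: $\mathrm{LP} = O(N)$, $\mathrm{OPT} = \Omega(g^2/k) = \Omega(N^{4/3-2\epsilon}/k)$, giving a gap of $\Omega(\frac1k N^{1/3-2\epsilon})$, together with an explicit verification that the length-$k$ routing of the $E_I$ demands (down one tail, fan out over the group via $E_U$, across the matching via $E''$, fan in, down the other tail) respects the $O(1/g)$ capacities.
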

\begin{proof}
The instances that we use to prove this integrality gap are the instances we obtain by applying the reduction of Elkin and Peleg~\cite{EP07} to the instances from Lemma~\ref{lem:CHK_UG_instance}.  We explain the reduction in detail so as to analyze the best fractional LP solution.  Let $r$ be the number of groups, so in the instances from Lemma~\ref{lem:CHK_UG_instance} we have that $r = n^{\frac13 + \epsilon}$.  For each group we will add $x = n^{\frac23 - \epsilon} / ((k-1)/2)$ paths, where each path has length $(k-1)/2$ (for ease of exposition we assume that $k$ is odd, but it does not actually matter).  More formally, let $(U', V', E')$ be a {\sc Min-Rep} instance from Lemma~\ref{lem:CHK_UG_instance}, and let $n_{MR} = |U' \cup V'|$.  Then the vertex set of our spanner problem is
\begin{equation*}
V = \left(U' \cup V'\right) \bigcup \left(\cup_{p=1}^x \cup_{i=1}^{r/2} \cup_{j=1}^{(k-1)/2} s_{i,j}^p \right) \bigcup \left(\cup_{p=1}^x \cup_{i=1}^{r/2} \cup_{j=1}^{(k-1)/2} t_{i,j}^p \right)
\end{equation*}

Note that $n = |V| = n_{MR} + x\cdot r \cdot (k-1)/2 = n_{MR} + ((n_{MR}^{\frac23 - \epsilon} / ((k-1)/2)) \cdot n_{MR}^{\frac13 + \epsilon} \cdot ((k-1)/2) = 2n_{MR}$, so we have only doubled the number of vertices.

The edge set is divided into a few different components.  First, let $E'' = \{(u,v) : u \in U', v \in V', \{u,v\} \in E'\}$ be the original {\sc Min-Rep} edges but now directed from $U$ to $V$.  Next we add a clique to every group: let $E_C = \cup_{i=1}^{r/2} ((U_i \times U_i) \cup (V_i \times V_i))$.  We also want to turn the new vertices into paths: let
\begin{equation*}
E_M = \bigcup_{p=1}^x \bigcup_{i=1}^{r/2} \bigcup_{j=1}^{(k-3)/2} \{(s_{i,j}^x, s_{i,j+1}^x), (t_{i,j}^x, t_{i,j+1}^x)\}
\end{equation*}
so for every fixed $i$ and $p$, the $E_M$ edges form a directed path from $s_{i,1}^p$ to $s_{i,(k-1)/2}^x$, and similarly for the $t$ vertices.  We also add edges to connect these paths to the original vertices: let
\begin{equation*}
E_U = \bigcup_{p=1}^x \bigcup_{i=1}^{r/2} \left(\{(s_{i, (k-1)/2}^p, u) : u \in U_i\} \cup \{(v, t_{i,1}^p) : v \in V_i\}\right)
\end{equation*}

Finally, we want to add edges to connect the endpoints of paths corresponding to superedges (which in our case is all $(U_i, V_j)$: let
\begin{equation*}
E_I = \bigcup_{p=1}^x \bigcup_{i=1}^{r/2} \bigcup_{j=1}^{r/2} \{(s_{i,1}^p, t_{j, (k-1)/2}^p)\}
\end{equation*}.

So our final edge set is $E = E'' \cup E_C \cup E_M \cup E_U \cup E_I$.

Elkin and Peleg~\cite{EP07} showed that the optimal spanner has size at least $\Omega(x \times OPT_{MR})$, where $OPT_{MR}$ is the size of the smallest {\sc Min-Rep} solution.  So in our case, the best spanner has size at least
\begin{equation*}
\Omega\left(\frac{n_{MR}^{\frac23 - \epsilon}}{(k-1) /2} \times n_{MR}^{\frac23 - \epsilon}\right) = \Omega\left(\frac1k n^{\frac43 - 2\epsilon}\right)
\end{equation*}

On the other hand, we claim that the best fractional solution is small, namely $O(n)$.  To see this, consider the following fractional assignment.  All edges inside the tails, i.e.~all edges in $E_M$, have fractional capacity $1$.  Let $q = n_{MR} / r = n^{\frac23 - \epsilon}$ be the size of each group.  We set the fractional capacity of all edges in $E'' \cup E_C \cup E_U$ to $2/q$, and set the fractional capacity of all edge in $E_I$ to $0$.  The cost of this solution is
\begin{align*}
|E_M| + \frac2q \left(|E''| + |E_C| + |E_U|\right) &\leq xr\frac{k-3}{2} + \frac2q \left(\frac{r^2 q}{4} + q^2 r + xrq\right) \\
& \leq xr\frac{k-3}{2} + \frac{2r^2}{4} + 2qr + 2xr \\
& \leq n_{MR} + n_{MR}^{\frac13 + \epsilon} + 2n_{MR} + 2n_{MR} \\
& = O(n)
\end{align*}

So it remains to prove that it is a valid fractional solution.  We proceed by analyzing each type of edge.  Obvious since every edge in $E_M$ is included with capacity $1$, we are able to send one unit of flow.  For some edge $(u,v) \in E''$, we can send $q-1$ flows, each of size $1/q$, first to the vertices in the same group as $u$ (via the $E_C$ edges), then across the matching to the group containing $v$ (via $E''$ edges), and then back to $v$ (via $E_C$ edges).  We can send the final $1/q$ flow directly on the edge $(u,v)$.  These paths have length $3 \leq k$ and obviously satisfy capacity constraints.  For some edge $(a,b) \in E_C$, we can do the same thing without crossing any matching: send $1/(q-1) \leq 2/q$ flow to each of the group mates of $a$, and then back into $b$.  For an edge $(s_{i, (k-1)/2}^p, u) \in E_U$, we can send $1/q$ flow from $s_{i, (k-1)/2}$ to each of the vertices in the group containing $u$ (using $E_U$ edges) and the from those vertices to $u$ (using $E_C$ edges).  Similarly, for an edge $(v, t_{i,1}^p)$ we can send $1/(q-1) < 2/q$ flow to each of the other vertices in $V_i$, and then from these vertices to $t_{i,q}^p$.  Finally, for an edge $(s_{i,1}^p, t_{j, (k-1)/2}^p) \in E_I$ we can send one unit along the $s_{i, \dots}^p$ path, then split it into $q$ paths using the $E_U$ edges, send each of those $1/q$ flows across the matching using $E''$ edges, recombine the flow at $t_{j, 1}^p$ using $E_U$ edges, and then send it down the path to $t_{j, (k-1)/2}^p$.  Each of these $q$ paths has length $k$ and satisfies the fractional capacities.  Thus this is a valid fractional solution.
\end{proof}

The reason that we lose an extra $\frac1k$ in the integrality gap is that because we are in the unit-length case we need to add many vertices in order to build long paths.  It is easy to modify the proof of Theorem~\ref{thm:unit_integrality_gap} to give a gap of $\Omega(n^{\frac13 - \epsilon})$ in the arbitrary lengths setting.

\section{Directed Unit-Length $3$-Spanner} \label{sec:stretch3}
While our $\tilde{O}(n^{2/3})$-approximation is an improvement over previous work for arbitrary edge lengths and for unit edge lengths with $k > 3$, for unit edge lengths with $k=3$ it matches the previous bounds of Elkin and Peleg~\cite{EP05} and Bhattacharyya et al.~\cite{BGJRW09}.  So for the specific case of unit-length directed $3$-spanner we develop a different rounding algorithm for the flow-based LP~\eqref{LP:spanner1} that gives an $\tilde{O}(\sqrt{n})$-approximation.  Our algorithm first solves LP~\eqref{LP:spanner1} and then rounds it using Algorithm~\ref{alg:3spanner}.  Informally, this rounding works by choosing a threshold value $T_v \in [0,1]$ for each vertex $v \in V$.  We then add all edges $(u,v)$ where either $T_u$ or $T_v$ is at most $\rho x_{u,v}$, where $\rho = \Theta(\sqrt{n} \log n)$ is an inflation factor to make the probabilities large enough.  This turns out to not be quite enough edges, so we also add all edges $(u,v)$ where \emph{both} $T_u$ and $T_v$ are at most $\sqrt{\rho x_{u,v}}$.  For technical reasons we have to add an extra complication: every vertex will actually choose another threshold, $T'_v$, and edges are added as described for every combination of $T$ and $T'$ thresholds.

\begin{algorithm}[H]
\caption{Rounding Algorithm for $3$-spanner}
\label{alg:3spanner}
Set $\rho = C \sqrt{n} \log n$ for a large constant $C$
\\
For every $v \in V$ choose independently two values $T_v, \ T'_v \in_R [0,1]$
\\
Let $E_1 = \{(u,v) \in E : \min\{T_u, T'_u, T_v, T'_v\} \leq \rho \cdot x_{u,v}\}$
\\
Let $E_2 = \{(u,v) \in E : \max\{\min\{T_u, T'_u\}, \min\{T_v, T'_v\}\} \leq \sqrt{\rho \cdot x_{u,v}}$
\\
Output $E' = E_1 \cup E_2$
\end{algorithm}

\begin{lemma} \label{lem:3spanner_cost}
Algorithm~\ref{alg:3spanner} returns a set of edges $E'$ with $\EX[|E'|] \leq O(\rho)$ times the size of the smallest $3$-spanner.
\end{lemma}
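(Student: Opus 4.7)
The plan is to bound, for each edge $(u,v) \in E$, the probability that it is placed in $E'$ by the sum of the probabilities that it lies in $E_1$ and in $E_2$, and then appeal to linearity of expectation together with the fact that the LP optimum lower bounds the true optimum (by Theorem~\ref{thm:LP_solve} and the fact that LP~\eqref{LP:spanner1} is a relaxation).

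First, I would handle $E_1$. Since the four values $T_u, T'_u, T_v, T'_v$ are independent and uniform on $[0,1]$, writing $p = \min\{1, \rho \cdot x_{u,v}\}$, I get
\begin{equation*}
\Pr[(u,v) \in E_1] = 1 - (1-p)^4 \leq 4p \leq 4\rho \cdot x_{u,v}.
\end{equation*}
Next I would handle $E_2$. Setting $A_u = \min\{T_u, T'_u\}$ and $A_v = \min\{T_v, T'_v\}$, these are independent and, for $q = \min\{1,\sqrt{\rho \cdot x_{u,v}}\}$, each satisfies $\Pr[A_u \leq q] = 1 - (1-q)^2 \leq 2q$. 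Hence
\begin{equation*}
\Pr[(u,v) \in E_2] = \Pr[A_u \leq q]\,\Pr[A_v \leq q] \leq 4q^2 \leq 4\rho \cdot x_{u,v}.
\end{equation*}

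Combining by the union bound, $\Pr[(u,v)\in E'] \leq 8\rho \cdot x_{u,v}$. Summing over edges,
\begin{equation*}
\EX[|E'|] \leq 8\rho \sum_{e \in E} x_e \leq 8\rho \cdot \mathrm{OPT}_{LP} \leq 8\rho \cdot \mathrm{OPT},
\end{equation*}
where the last inequality holds because LP~\eqref{LP:spanner1} is a valid relaxation of directed $3$-spanner. This yields the claimed $O(\rho)$ factor.

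There is essentially no obstacle here; the lemma is a direct probability calculation. The only subtlety worth flagging is that the bound $1-(1-p)^4 \leq 4p$ (and its analogue for $E_2$) requires capping at $1$, but for edges with $\rho x_{u,v} \geq 1$ the edge contributes at most $1 \leq \rho x_{u,v}$ anyway, so the same linear bound in $x_{u,v}$ holds unconditionally. The real work, of course, lies not in this expectation bound but in showing that $E'$ is actually a $3$-spanner with good probability; that is deferred to subsequent lemmas.
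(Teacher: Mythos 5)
Your proof is correct and follows essentially the same route as the paper: bound $\Pr[(u,v)\in E_1]$ by $4\rho x_{u,v}$ via a union bound over the four thresholds, bound $\Pr[(u,v)\in E_2]$ by $4\rho x_{u,v}$ using the independence of $\min\{T_u,T'_u\}$ and $\min\{T_v,T'_v\}$, and conclude by linearity of expectation and the fact that the LP value lower bounds the optimum. Your remark about capping the probabilities at $1$ is a harmless refinement of the same argument.
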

\begin{proof}
Let $(u,v) \in E$.  Obviously $\Pr[T_u \leq \rho x_{u,v}] \leq \rho x_{u,v}$, and similarly for $T'_u, T_v,$ and $T'_v$. So $\Pr[(u,v) \in E_1] \leq 4\rho x_{u,v}$ by a simple union bound.  To analyze $E_2$, note that the probability that $(u,v) \in E_2$ is equal to the probability that $\min\{T_u, T'_u\} \leq \sqrt{\rho x_{u,v}}$ and $\min\{T_v, T'_v\} \leq \sqrt{\rho x_{u,v}}$.  Since these are independent, the probability that the both happen is equal to the product of their probabilities.  And by another union bound we get that each of the probabilities is at most $2 \sqrt{\rho x_{u,v}}$, and thus the probability that $(u,v) \in E_2$ is at most $4 \rho x_{u,v}$.  Thus the probability that $(u,v) \in E'$ is at most $8 \rho x_{u,v}$, so by linearity of expectations the expected number of edges in $E'$ is at most $8 \rho \sum_{(u,v) \in E} x_{u,v}$, which is exactly $8\rho$ times the cost of the LP solution and thus at most $8 \rho$ times the size of the smallest $3$-spanner.\end{proof}

\begin{lemma} \label{lem:3spanner_correct}
For every edge $(u,v) \in E$, the probability that there is no path of length at most $3$  from $u$ to $v$ in $E'$ is at most $1/e$
\end{lemma}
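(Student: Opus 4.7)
The plan is to partition the LP's flow for the demand $(u,v)\in E$ by path length and treat each regime using different features of the rule defining $E'$. Since $(u,v)$ is itself an edge and stretch-$3$ paths in the unit-length setting have at most three hops, I would write $f=f^{(1)}+f^{(2)}+f^{(3)}$ where $f^{(i)}$ collects the flow on paths of exactly $i$ hops, so that the constraint $\sum_P f_P\ge 1$ forces $|f^{(i)}|\ge 1/3$ for some $i\in\{1,2,3\}$. It then suffices to prove the failure probability is at most $1/e$ in each of the three cases.

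The one-hop case is immediate: $|f^{(1)}|\ge 1/3$ gives $x_{uv}\ge 1/3$, hence $\rho x_{uv}>1$, so $\min\{T_u,T'_u,T_v,T'_v\}\le\rho x_{uv}$ holds deterministically and $(u,v)\in E_1\subseteq E'$. For the two-hop case, let $y_w$ denote the flow on $u\to w\to v$, so $\sum_w y_w\ge 1/3$ and $y_w\le\min(x_{uw},x_{wv})$. I would introduce the sufficient event $B_w:=\{\min(T_w,T'_w)\le\rho y_w\}$: whenever $B_w$ holds, $\min(T_w,T'_w)\le\rho y_w\le\rho x_{uw}$ places $(u,w)$ into $E_1$ and symmetrically $(w,v)\in E_1$, so $B_w$ implies survival of the path through $w$. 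Since the events $\{B_w\}_w$ depend on disjoint pairs of thresholds, they are independent, and hence $\Pr[\text{no two-hop path survives}]\le\exp\!\left(-\sum_w\min(1,\rho y_w)\right)$. A short case split (whether some $y_w$ exceeds $1/\rho$ or not) shows the exponent is at least $\min(1,\rho/3)\ge 1$ at $\rho=\Theta(\sqrt n\log n)$, which suffices. If a weaker sufficient condition using only $E_2$ had to be used instead, the events $\{A_w\}$ would be coupled through $T_u,T_v$ and one could still recover the bound by conditioning on $T_u,T'_u,T_v,T'_v$ and applying Janson's inequality to the now-independent $T_w$-randomness.

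The three-hop case is the main obstacle, because distinct paths $u\to w_1\to w_2\to v$ can share not only the endpoints $u,v$ but also internal vertices, so a naive single-path sufficient condition yields only $\Pr\approx(\rho z_{w_1,w_2})^2$ per path and the first-moment sum $\rho^2\sum z^2$ is too small to conclude. Following the paper's technical overview, my plan is to \emph{structurally decompose} the three-hop flow by hop and argue in three layers: (a) using only the randomness at $u$, a concentration bound shows that a constant fraction of the first-hop mass $\{\sum_{w_2}z_{w_1,w_2}\}_{w_1}$ sits on edges $(u,w_1)$ whose inclusion in $E'$ is already determined, via the $E_1$-contribution from $u$'s side or the $E_2$-contribution after the $T_{w_1}$ check; (b) symmetrically for $v$ and the third-hop edges $(w_2,v)$; (c) for the surviving middle hops, a per-edge $E_2$ estimate together with the independence of the thresholds at $w_1,w_2$ from the layer-(a)/(b) randomness shows that an $\Omega(1)$-fraction of the middle-hop mass also survives. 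The crux is to \emph{align} the three surviving pieces of mass on a \emph{common} path: flow conservation at $w_1$ and $w_2$ is the key tool, guaranteeing that mass exiting $u$ through a surviving first hop still arrives at $v$ through a surviving third hop and that the middle hops connect them. Combining the three cases gives the required $1/e$ failure bound, with the constants absorbed into the large constant $C$ hidden in $\rho$.
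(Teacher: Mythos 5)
Your one-hop and two-hop cases are correct and essentially identical to the paper's (the paper uses only $T_{w}$ rather than $\min\{T_w,T'_w\}$, but the independence-across-midpoints argument is the same). The three-hop case, however, is where the entire content of the lemma lives, and your plan for it has genuine gaps. First, the claimed independence structure is wrong: whether $(u,w_1)$ is certified (layer (a)) and whether $(w_1,w_2)$ is certified (layer (c)) both depend on the thresholds at $w_1$, and symmetrically for $w_2$; so ``the independence of the thresholds at $w_1,w_2$ from the layer-(a)/(b) randomness'' does not hold. Relatedly, ``using only the randomness at $u$'' cannot yield a concentration bound over first-hop edges, since $T_u,T'_u$ is a single draw shared by all of them; any concentration over first hops must come from the $T_{w_1}$'s, which is exactly what re-entangles the layers. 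Second, the alignment step is not a proof: even if constant fractions of the first-hop, middle-hop, and last-hop flow mass survive, flow conservation does not force a single path all three of whose edges survive, precisely because the three survival events for a fixed path are positively and negatively correlated through the shared thresholds at $p_1$ and $p_2$. This correlation is the crux, and your sketch defers it to ``flow conservation is the key tool'' without an argument.

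The paper resolves this by making the case analysis do the decoupling for you. After reducing (Case 0) to the regime where every edge carries at most $1/\rho$ of the $3$-hop flow, it classifies each path by comparing $\rho\hat x_{u,p_1}$ and $\rho\hat x_{p_2,v}$ against $\sqrt{\rho\hat x_{p_1,p_2}}$, so that a \emph{single} threshold event at an internal vertex simultaneously certifies an outer edge (via $E_1$) and that vertex's half of the middle edge (via $E_2$). In the type where both outer edges dominate, the sufficient event uses only $T_{p_1},T_{p_2}$ and the residual dependence between paths sharing a first or last edge is controlled by Janson's inequality; in the type where the middle edge dominates, the proof uses $T'_{p_1}$ for the first hop and $T_{p_2}$ for the last hop --- this is the real reason for the second threshold $T'$, which your sketch never exploits --- together with a two-stage Chernoff-then-product argument and (in the mixed types) a dyadic bucketing of the last-hop capacities. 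If you want to salvage a layered argument, you would need to reproduce some version of this: choose, per path type, which threshold certifies which edge so that the certifying events for the three edges of a path either coincide or are genuinely independent, and then handle the cross-path dependence with a second-moment or Janson-type bound rather than a first-moment count.
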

\begin{proof}
We know that in the LP solution a flow of $1$ unit was sent from $u$ to $v$ along paths of length at most $3$.  If at least $1/3$ flow was sent from $u$ to $v$ through paths of length $1$ (i.e.~through $(u,v)$), then $x_{u,v} \geq 1/3$ so $\rho x_{u,v} > 1$, and thus $(u,v) \in E_1$ with probability $1$.

Alternatively, suppose at least $1/3$ of the flow is sent through paths of length $2$.  Let $w_1, w_2, \dots, w_{\ell}$ be the midpoints of these paths (note that all $w_i$ are distinct).  For each $i \in [\ell]$, let $f_i = \min\{x_{u,w_i}, x_{w_i, v}\}$.  For a length $2$ path to be in $E'$, it is sufficient for one of these $w_i$ to have $T_{w_i} \leq \rho f_i$ (since that would mean both $(u, w_i)$ and $(w_i, v)$ would be in $E_1$).  If $f_i \geq 1/\rho$ for some $i$, then this happens with probability $1$.  If $f_i < 1/\rho$, then the probability that none of these events occur is $\prod_{i=1}^{\ell} (1-\rho f_i) \leq e^{-\rho \sum_i f_i}$.  Since $1/3$ flow was sent along these paths, $\sum_i f_i \geq 1/3$, and thus the probability that $E_1$ does not contain length two $u-v$ path is at most $e^{-\rho / 3}$, which is clearly small enough to satisfy the lemma.

The most difficult case is when at least $1/3$ of the flow is sent through paths of length $3$.  For each such path $P$, let $p_1$ denote the first vertex after $u$ and let $p_2$ denote the second vertex after $u$, so $P = u \rightarrow p_1 \rightarrow p_2 \rightarrow v$.  For every edge $(w,y)$, let $\hat{x}_{w,y}$ be the amount of $u\rightarrow v$ flow using paths of length $3$ that use $(w,y)$.  Clearly $\hat x_{w,y} \leq x_{w,y}$, so it is sufficient to show that the rounding algorithm works when using the $\hat x$ values instead of the $x$ values.  We now divide this case into five subcases.

\paragraph{Case 0:} We first consider the case that there is some edge $(u,a)$ with $\hat x_{u,a} \geq 1/\rho$.  Then $(u,a)$ is in $E_1$ with probability $1$, so it is definitely included in the spanner, and at least $1/\rho$ flow actually flows through $(u,a)$.  So this is essentially like the case of paths of length $2$, just with flow of $1/\rho$ instead of $1/3$.  Let $P_a$ denote the set of length $3$ paths that begin with the edge $(u,a)$ (i.e.~the set of paths where $p_1 = a$).  If there is some path $P \in P_a$ with $\hat x_{a, p_2} \geq 1/\rho$, then clearly by flow conservation $\hat x_{p_2, v} \geq 1/\rho$, so with probability $1$ all of $P$ is in $E_1$.  Otherwise, the probability that such a $P$ is not contained in $E_1$ is at most $1-\rho x_{a,p_2}$.  Thus the probability that we get no $P \in P_a$ is at most $\prod_{P \in P_a} (1-\rho x_{a,p_1}) \leq e^{-\rho \sum_{P \in P_a}x_{a, p_2}} = e^{-\rho \sum_{P \in P_a} f(P)} \leq 1/e$, satisfying the lemma. The same argument can be made for the case that there is some edge $(b,v)$ with $\hat x_{b,v} \geq 1/\rho$, and if some other edge $(w,y)$ has $\hat x_{w,y} \geq 1/\rho$ then by flow conservation there must be some $e = (u,a)$ or $e = (b,v)$ with $\hat x_e \geq 1/\rho$.  So for the rest of the argument we assume without loss of generality that $\hat x_e \leq 1/\rho$ for all $e \in E$.

\paragraph{Cases 1-4:} We now divide the length $3$ paths into four types:
\begin{enumerate}
\item $\sqrt{\rho \hat x_{p_1, p_2}} \leq \rho \hat x_{u, p_1}$ and $\sqrt{\rho \hat x_{p_1, p_2}} \leq \rho \hat x_{p_2, v}$
\item $\rho \hat x_{u, p_1} \leq \sqrt{\rho \hat x_{p_1, p_2}}$ and $\rho \hat x_{p_2, v} \leq \sqrt{\rho \hat x_{p_1, p_2}}$
\item $\rho \hat x_{u, p_1} \leq \sqrt{\rho \hat x_{p_1, p_2}} \leq \rho \hat x_{p_2, v}$
\item $\rho \hat x_{p_2, v} \leq \sqrt{\rho \hat x_{p_1, p_2}} \leq \rho \hat x_{u, p_1}$
\end{enumerate}

These types are exhaustive, so since at least $1/3$ flow uses these paths at least one of the types contains paths that correspond to at least $1/12$ units of flow.  Let $\mathcal P_i$ denote the paths of type $i$.  We now consider each of the four types in turn.

\paragraph{Case 1:} Suppose that at least $1/12$ units of flow use paths from $\mathcal P_1$.  For every edge $(w,y)$ appearing in some path from $\mathcal P_1$, let $x'_{w,y}$ be the actual amount of $\mathcal P_1$ flow that uses $(w,y)$.  Then $x'_{w,y} \leq \hat x_{w,y} \leq x_{w,y}$ for all $(w,y)$.  For any path $P \in \mathcal P_1$, a sufficient condition for $P$ to appear in $E'$ is for $T_{p_1} \leq \sqrt{\rho x'_{p_1, p_2}}$ and for $T_{p_2} \leq \sqrt{\rho x'_{p_1, p_2}}$, since if this happens then $(u,p_1) \in E_1$, $(p_1, p_2) \in E_2$, and $(p_2, v) \in E_1$.  For each $P$, let $E_P$ be the event that this happens.  Then $\Pr[E_P] = \sqrt{\rho x'_{p_1, p_2}} \sqrt{\rho x'_{p_1, p_2}} = \rho x'_{p_1, p_2}$.  Let $Y = \sum_{P \in \mathcal P_1} E_P$ be the number of type $1$ paths that are in $E$ because of this sufficient condition.  It is sufficient to show that the probability that $Y > 0$ is at least a constant, since the repetition of the algorithm $C \log n$ times makes this probability become at least $1 - 1/n^3$.  To bound the probability that $Y > 0$ we will use Janson's inequality~\cite[Chapter 3]{DP09}.  Janson's inequality has two parameters: the expectation $\EX[Y]$ and a value $\Delta$ which intuitively measures the amount of dependency.  Informally, $\Delta$ is the sum over (ordered) pairs of dependent events of the probability that they both occur.  In our setting, two events $E_P$ and $E_{P'}$ are dependent if $P$ and $P'$ share either a first edge or a last edge (if they share a middle edge then they are obviously the same path).  Let $P \sim P'$ if $E_P$ and $E_{P'}$ are dependent.  Then for us $\Delta = \sum_{P \in P_1} \sum_{P' \in P_1 : P' \sim P} \Pr[E_P \land E_{P'}]$.  Janson's inequality implies that $\Pr[Y = 0] \leq e^{-\frac{\EX[Y]^2}{\EX[Y] + \Delta}}$.

Every length $3$ path has a different middle edge, so the fact that at least $1/12$ units of flow use paths in $\mathcal P_1$ implies that $\EX[Y] = \sum_{P \in \mathcal P_1}\rho x'_{p_1, p_2} \geq \rho / 12$.  To bound $\Delta$, we first consider the case the two paths share the first edge, i.e.~we will try to bound $\Delta_1 = \sum_{a \in V} \sum_{P \in P_a} \sum_{P' \neq P \in P_a} \Pr[E_P \land E_{P'}]$, where $P_a$ is the set of paths in $P_1$ that begin with the edge $(u,a)$.  Let $P, P' \in P_a$ be two such paths.  In order for both $E_P$ and $E_{P'}$ to occur, it is necessary and sufficient for the following four conditions to hold: (1) $T_a \leq \sqrt{\rho x'_{a, p_2}}$, (2) $T_a \leq \sqrt{\rho x'_{a, p'_2}}$, (3) $T_{p_2} \leq \sqrt{\rho x'_{a, p_2}}$, and (4) $T_{p'_2} \leq \sqrt{\rho x'_{a, p'_2}}$.  Obviously the probability of this happening is exactly $\sqrt{\rho x'_{a, p_2}} \cdot \sqrt{\rho x'_{a, p'_2}} \cdot \min\{\sqrt{\rho x'_{a, p_2}}, \sqrt{\rho x'_{a, p'_2}}\}$.  Using the geometric mean to upper bound the minimum, we get that $\Pr[E_P \land E_{P'}] \leq \rho^{3/2} (x'_{a, p_2})^{3/4} (x'_{a, p'_2})^{3/4}$.  Thus
\begin{align*}
\Delta_1 & = \sum_{a \in V} \sum_{P \in P_a} \sum_{P' \neq P \in P_a} \Pr[E_P \land E_{P'}] \\
& \leq \rho^{3/2} \sum_{a \in V} \sum_{P \in P_a} \sum_{P' \neq P \in P_a} (x'_{a, p_2})^{3/4} (x'_{a, p'_2})^{3/4} \\
& \leq \rho^{3/2} \sum_{a \in V} \left(\sum_{P \in P_a} \left(x'_{a, p_2}\right)^{3/4}\right)^2 \\
& \leq \rho^{3/2} \sum_{a \in V} \left(n \cdot \left(\frac{x'_{u,a}}{n}\right)^{3/4}\right)^2 = \rho^{3/2} \sqrt{n} \sum_{a \in V} \left(x'_{u,a}\right)^{3/2} \\
& \leq \rho^{3/2} \sqrt{n} \cdot \rho \left(\frac{1}{\rho}\right)^{3/2} = \rho \sqrt{n}
\end{align*}

The same analysis can be done for the other type of dependent paths, when they agree on the last edge.  Thus $\Delta \leq 2 \rho \sqrt{n}$.  So Janson's inequality gives us that $\Pr[Y = 0] \leq e^{-\frac{\rho^2 / 144}{\rho/12 + 2 \rho \sqrt{n}}}$.  By setting $\rho$ to be larger than $288\sqrt{n} + 12$ we get that this probability is less than $1/e$.

\paragraph{Case 2:} We will use a different style of analysis for the case when at least $1/12$ units of flow use paths in $\mathcal P_2$.  Instead of bounding the expectation and using a concentration bound, we will simply show that with constant probability Algorithm~\ref{alg:3spanner} includes a $u\rightarrow v$ path.  As in the previous case, let $x'_{w,y}$ denote the actual amount of flow on $\mathcal P_2$ paths using edge $(w,y)$.  Then $x'_{w,y} \leq \hat{x}_{w,y} \leq x_{w,y}$, so it suffices to prove that rounding using the $x'$ values works with constant probability.  Note that for any path $P \in \mathcal P_2$ the middle edge is used only by $P$, so $x'_{p_1, p_2} = \hat x_{p_1, p_2}$.  We will show that with constant probability there is a path $P \in \mathcal P_2$ with $T'_{p_1} \leq\rho x'_{u, p_1}$ and $T_{p_2} \leq \rho x'_{p_2, v}$.  It is easy to see that this is sufficient for all three edges of $P$ to be in $E'$.

Let $B \subseteq V$ be the set of vertices that are on the last hop of a path in $\mathcal P_2$, i.e.~$b \in B$ if there is some $P \in \mathcal P_2$ with $p_2 = b$.  Let $B' = \{b \in B : x'_{b, v} \geq \frac{1}{24 n}\}$ be the set of vertices from $B$ with large flow through them, and let $\mathcal P'_2 = \{P \in \mathcal P_2 : p_2 \in B'\}$ be the paths that go through these vertices.  Since $1/12$ flow total is sent on $\mathcal P_2$ paths, and each vertex not in $B'$ can transport at most $1/(24n)$ flow, at least $1/24$ flow is sent on $\mathcal P'_2$ paths.  Note that we are still in the case where every edge $e$ has $\hat{x}_e \leq 1/\rho$, so $1/(24 n) \leq x'_{b, v} \leq 1/\rho$ for all $b \in B'$.  For every $b \in B'$ let $E_b$ be the event that $T_b \leq \rho x'_{b,v}$ and let $Y = \sum_{b \in B'} E_b$ be the number of vertices in $B$ for which this event occurs.  Then $\EX[Y] = \sum_{b \in B'} \rho x_{b,v} = \rho \sum_{b \in B'} x_{b,v} \geq \rho / 24$, where the last inequality follows from the fact that at least $1/24$ flow uses paths in $\mathcal P'_2$.  Since these events are all independent, a simple Chernoff bound implies that $Y \geq \rho / 48$ with probability at least $1 - e^{-\rho / 192}$.  Let $\hat B \subseteq B'$ be the set of vertices in $B'$ for which this even occurs, so with high probability $|\hat B| \geq \rho / 48$.

Now we want to lower bound the probability that at least one path that passes through $\hat B$ has a corresponding first hop that is below the threshold.  Let $A$ be the set of first hops for vertices in $\hat B$, so $a \in A$ if and only if there is some path $P \in \mathcal P'_2$ and $b \in \hat B$ with $a = p_1$ and $b = p_2$.  It suffices for at least one $a \in A$ to have $T'_a \leq \rho x'_{u,a}$, since by the definition of $A$ there is some corresponding $b \in \hat{B}$ that completes a path in $\mathcal P'_2$.  Note that since $|\hat B| \geq \rho / 48$ and every $b \in \hat B$ has $x'_{b,v} \geq 1/(24 n)$, the total amount of flow passing through paths in $\mathcal P'_2$ that use vertices in $\hat{B}$ as their final hops is at least $\rho / 48 \cdot (1/(24 n)) = \rho / (1152 n)$.  Thus $\sum_{a \in A} x'_{u,a} \geq \rho / (1152 n)$, so the probability that no $a \in A$ has $T'_a \leq \rho x'_{u,a}$ is at most $\prod_{a \in A} (1- \rho x'_{u,a}) \leq e^{- \rho \sum_{a \in A} x'_{u,a}} \leq e^{- \rho^2 / (1152 n)}$.  Note that here we use the fact that that the $T'$ thresholds are independent of the $T$ thresholds, since otherwise this probability calculation for $A$ could be dependent on the already chose thresholds of $\hat{B}$.

The total probability that we do not include some length $3$ path is at most the sum of the probability that $|\hat B|$ is not large enough and the probability that no corresponding $a$ is selected, which is at most $e^{-\rho / 192} + e^{-\rho^2 / (1152 n)} < 1/e$ as desired.

\paragraph{Case 3:} In the third case there is at least $1/12$ units of flow along paths in $\mathcal P_3$.  We will use an analysis similar to that of the second case.  Let $\hat P_3 \subseteq \mathcal P_3$ be the collection of type $3$ paths with large first hop capacity: $P \in \hat P_3$ if $\hat x_{u,p_1} \geq 1/(24 n)$.  Obviously at most $1/24$ flow can use low capacity first hops, so at least $1/24$ units of flow uses paths in $\hat P_3$.  If $P \in \hat P_3$, then because $P$ is type $3$ we know that $\sqrt{\rho \hat x_{p_1, p_2}} \geq \rho \hat x_{u, p_1}$ and thus $\hat x_{p_1, p_2} \geq \rho \hat x_{u,p_1}^2 \geq \rho /(576 n)$.  So a sufficient condition for a path $P \in \hat P_3$ to be in $E'$ is for $T'_{p_1} \leq \rho \hat x_{u,p_1}$ and for $T_{p_2} \leq \sqrt{\rho^2 / (576 n)^2} = \rho / (24 n)$.

As before, let $B = \{b \in V : \exists P \in \hat P_3 \text{ with } b = p_2\}$.  For $b \in B$, let $x'_{b, v} = \sum_{P \in \hat P_3 : b = p_2} f(P)$ be the amount of flow along paths in $\hat P_3$ that use $b$ as a last hop.  So $x'_{b,v} \leq \hat x_{b,v}$.  Let $B' = \{b \in B : x'_{b,v} \geq 1/(48 n)\}$.  Since at least $1/24$ flow is sent using paths in $\hat P_3$, at least $1/48$ flow is sent using paths in $\hat P_3$ that use a vertex in $B'$ as a last hop.  Call this set of paths $P'_3$.  Now we partition $B'$ into classes based on their $x'$ values: let $B_i = \{ b \in B' : 1/2^{i} < x'_{b,v}  \leq 1/2^{i-1}\}$.  Note that  the first $\log \rho$ of these classes are empty since we are still in a setting where all edges $e$ have $\hat x_e \leq 1/\rho$, and there are at most $\log n$ classes by the definition of $B'$.  These classes also partition the paths in $P'_3$ (since every path in $P'_3$ uses a vertex from $B'$ as a final hop), so at least one class contains at least $1/(48 \log n)$ flow.  Let $B_i$ be this class.  Then every vertex $b \in B_i$ has $x'_{b, v} \geq 1/2^i$ and $|B_i| \geq (1/(48 \log n) / (1/2^{i-1}) = 2^{i-1} / (48\ \log n) \geq \rho / (48 \log n)$.

Let $\hat B \subseteq B_i$ be the set of vertices $b \in B_i$ with $T_b \leq \rho / (24 n)$.  The expected number of vertices in $\hat B$ is $|B_i| \rho / (24 n) \geq \rho^2 / (1152 n \log n)$.  Since $\rho \geq \Omega(\sqrt{n} \log n)$ this becomes $\Omega(\log n)$, so a simple Chernoff bound suffices to guarantee that $|\hat B| \geq (1/2) \EX[|\hat B|] \geq |B_i| \rho / (48 n) \geq 2^{i-1} \rho / (2304 n \log n)$ with probability at least $1 - e^{-\Theta(\rho^2 / n \log n)} \geq 1-e^{-\Theta(\log n)}$.  Assuming that this occurs, since $x'_{b,v} \geq 1/2^i$ for all $b \in B_i$ the amount of flow along paths in $P'_3$ that use a vertex in $\hat B$ as their last hop is at least $\rho / (4608 n \log n)$.  Let $A$ be the set of vertices that are the first hops of these paths, i.e.~$A = \{a \in V: \exists P \in P'_3,\ b \in \hat B \text{ with } a = p_1 \text{ and } b = p_2\}$.  Then $\sum_{a \in A} \hat x_{u,a} \geq \rho / (4608 n \log n)$ in order for there to be enough capacity to shop the flow.  Now in order to complete some path from $P'_3$ we just need for one of these $a \in A$ to have $T'_a \leq \rho \hat x_{u,a}$.  The probability that this does not happen is at most $\prod_{a \in A} (1- \rho \hat x_{u,a}) \leq e^{-\rho \sum_{a \in A} \hat x_{u,a}} \leq e^{-\rho^2 / 4608 n \log n}$.

Thus the total probability of failure is at most $2 e^{-\Theta(\rho^2 / n \log n)}$ (the probability that $|\hat B|$ is below half of its expectation plus the probability that no $A$ vertex has low enough threshold).  Since $\rho \geq \Theta(\sqrt{n} \log n)$ this becomes at most a constant, satisfying the lemma.

\paragraph{Case 4:} In this case at least $1/12$ units of flow use paths in $\mathcal P_4$.  This case is completely analogous to case 3, since we did not ever use directionality in our proof of case 3.  Since this is the final case, it concludes the proof of the lemma.
\end{proof}

It is interesting to note that Lemmas~\ref{lem:3spanner_cost} and~\ref{lem:3spanner_correct} hold even for a weighted version in which every edge has an arbitrary nonnegative cost associated with it and our goal is to find the minimum cost $3$-spanner.  So our approximation algorithm actually generalizes to this weighted version.

\begin{theorem} \label{thm:3spanner_thm}
There is a polynomial time $\tilde{O}(\sqrt{n})$-approximation algorithm for the unit-length directed $3$-spanner problem, even with arbitrary costs on the edges.
\end{theorem}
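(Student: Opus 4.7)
The approach is to amplify Lemmas~\ref{lem:3spanner_cost} and~\ref{lem:3spanner_correct} via independent repetition and then take a union bound over all demand edges. First I would compute a fractional solution to LP~\eqref{LP:spanner1} using Theorem~\ref{thm:LP_solve}, which yields an LP value within a $(1+\epsilon)$ factor of the optimum integral spanner. I would then run Algorithm~\ref{alg:3spanner} independently $\ell = \Theta(\log n)$ times on this single fractional solution, drawing fresh thresholds $\{T_v, T'_v\}$ in each run, and output the union $\hat E$ of the edge sets produced. The weighted case will be handled by simply reading each of the two lemmas with weights in mind, as explained below.

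For correctness, Lemma~\ref{lem:3spanner_correct} gives per-edge failure probability at most $1/e$ per run, so independence makes the failure probability after $\ell$ runs at most $e^{-\ell} \leq 1/n^3$ for a suitably large constant. A union bound over the at most $n^2$ edges in $E$ certifies that $\hat E$ is a valid $3$-spanner with probability at least $1 - 1/n$. For cost, Lemma~\ref{lem:3spanner_cost} together with linearity of expectation gives $\EX[|\hat E|] \leq \ell \cdot O(\rho) \cdot \mathrm{OPT} = \tilde O(\sqrt n)\cdot \mathrm{OPT}$; in fact, inspecting the proof of Lemma~\ref{lem:3spanner_cost} shows it actually bounds $\Pr[e \in E'] \leq O(\rho \cdot x_e)$ for each $e$, so in the weighted setting linearity of expectation immediately gives $\EX[\sum_{e \in \hat E} c_e] \leq O(\rho \log n) \cdot \sum_e c_e x_e$, and the correctness argument in Lemma~\ref{lem:3spanner_correct} uses only flow conservation and the distribution of the thresholds and is therefore insensitive to edge costs.

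To combine the in-expectation cost bound with the high-probability validity bound, I would use the standard Markov-plus-verify trick: repeat the entire procedure $O(\log n)$ times, check in polynomial time (e.g.\ by all-pairs shortest paths) whether each output is a valid $3$-spanner and whether its cost is within a factor $8$, say, of the expected bound, and return the cheapest such output. With constant probability a single trial produces a valid spanner of cost $\tilde O(\sqrt n)\cdot \mathrm{OPT}$ (by union-bounding the $1/n$ invalidity event with a Markov failure event on cost), so $O(\log n)$ trials succeed with high probability.

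The only real obstacle I anticipate is this bookkeeping between the high-probability correctness statement and the in-expectation cost statement, which is precisely what the Markov-plus-repetition-and-verify step above is designed to dispose of. Once that is in place the theorem follows directly from the two lemmas, and the weighted version comes for free because neither lemma used any structure of the uniform-cost objective beyond linearity of expectation.
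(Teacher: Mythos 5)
Your proposal is correct and follows essentially the same route as the paper: solve the LP, repeat Algorithm~\ref{alg:3spanner} with fresh randomness $O(\log n)$ times, and invoke Lemmas~\ref{lem:3spanner_cost} and~\ref{lem:3spanner_correct} (the Markov-plus-verify step you add is a reasonable tidying of the expectation-versus-high-probability bookkeeping that the paper leaves implicit). The only cosmetic difference is that the paper observes the LP can be solved \emph{exactly} here, since for unit lengths and $k=3$ the path variables are polynomially many, whereas you route through the approximate solver of Theorem~\ref{thm:LP_solve}; both work.
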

\begin{proof}
The algorithm is simple: solve LP~\eqref{LP:spanner1} (note that this can be done exactly since in this setting the linear program has only a polynomial number of variable) and then repeat Algorithm~\ref{alg:3spanner} with fresh randomness $O(\log n)$ times.  Lemma~\ref{lem:3spanner_correct} implies that this gives a valid solution with high probability, and Lemma~\ref{lem:3spanner_cost} implies that it is a $\tilde{O}(\sqrt{n})$-approximation.
\end{proof}

\subsection{Extension to Fault-Tolerant Version} \label{sec:stretch3_FT}
It is easy to see that this algorithm can be trivially extended to the $r$-fault-tolerant setting.  For each set $F$ of faults, the analysis works the same as in Lemma~\ref{lem:3spanner_correct}.  We just need to solve LP~\eqref{LP:FT_spanner} instead of LP~\eqref{LP:spanner1} and change the parameters ($\rho$ and the number of times Algorithm~\ref{alg:3spanner} is repeated) to make the probability of failure small enough to apply a union bound to all possible failure sets $F$ and edges $(u,v)$ instead of just over the edges.  In particular, for the vertex failure setting we need the probability of failure to be less than $1/n^{r+2}$ and for the edge failure setting we need to probability of failure to be less than $1/n^{2r+2}$.  The main takeaway is that for directed $3$-spanner we get fault-tolerance at a cost of only $\tilde{O}(r)$ instead of something exponential in $r$,  as for the $k > 3$ case and previous work on absolute bounds~\cite{CLPR09}.

\begin{theorem} \label{thm:3spanner_FT}
For any constant $\epsilon > 0$ there is a polynomial time $(\frac{1}{3(1+\epsilon)}, O(\frac{1+\epsilon}{\epsilon} r \sqrt{n} \log^2 n))$-approximation algorithm for the $r$-fault-tolerant directed $3$-spanner problem with unit lengths.  There is also a true $O(r \sqrt{n} \log^2 n)$-approximation algorithm that takes $n^{O(r)}$ time.
\end{theorem}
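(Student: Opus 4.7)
The plan is to mimic the strategy of Theorem~\ref{thm:3spanner_thm} but work with LP~\eqref{LP:FT_spanner} in place of LP~\eqref{LP:spanner1}, and to repeat Algorithm~\ref{alg:3spanner} enough times to support a union bound over all fault sets $F$ and all demand edges $(u,v)\in E\setminus E_F$. For the true $n^{O(r)}$-time algorithm I would solve LP~\eqref{LP:FT_spanner} exactly: its dual has $n^{O(r)}$ variables, and the same Restricted Shortest Path / Bellman-Ford separation oracle used in Section~\ref{sec:LP_FT} works here, giving optimal capacities $\{x_e\}$ together with stretch-$3$ flows for every fault set of size at most $r$. For the polynomial-time bicriteria version I would instead invoke Theorem~\ref{thm:FT_LP_solve} with $k=3$, obtaining capacities of value at most $\tfrac{1+\epsilon}{\epsilon}$ times the LP optimum that support stretch-$3$ flows in $G\setminus F$ for every $F$ of size at most $r/(3(1+\epsilon))$.

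Given such capacities I would feed them into Algorithm~\ref{alg:3spanner} verbatim, keeping $\rho=\Theta(\sqrt n\log n)$ but raising the number of independent repetitions from $O(\log n)$ to $\Theta(r\log n)$. The crucial point is that Lemma~\ref{lem:spanner_path} and Lemma~\ref{lem:3spanner_correct} are per-pair statements: their proofs only use the existence of $1$ unit of stretch-$3$ flow from $u$ to $v$ that respects the capacities $\{x_e\}$, and never rely on those paths living in the full graph $G$. Hence for any fixed $F$ and any remaining edge $(u,v)$ the flow certificate provided by LP~\eqref{LP:FT_spanner} restricted to the admissible fault set lets Lemma~\ref{lem:3spanner_correct} apply directly in $G\setminus F$, yielding failure probability at most $1/e$ per run of the algorithm.

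After $\Theta(r\log n)$ independent runs the failure probability for a single pair $(F,(u,v))$ drops below $n^{-(3r+3)}$, which is small enough to union-bound against the $n^{O(r)}$ possible fault sets (in both the vertex- and edge-variants) and the $O(n^2)$ demand edges. By Lemma~\ref{lem:3spanner_cost} each run adds $O(\rho)$ times the LP optimum in expectation, so in the true version the total expected size is $O(r\sqrt n\,\log^2 n)\cdot \mathrm{OPT}$, and in the bicriteria version the extra $\tfrac{1+\epsilon}{\epsilon}$ LP blow-up yields the claimed $O(\tfrac{1+\epsilon}{\epsilon}\,r\sqrt n\,\log^2 n)$ ratio together with the $\tfrac{1}{3(1+\epsilon)}$ fault tolerance; a standard Markov/repetition argument converts ``in expectation'' into ``with high probability.''

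The main subtlety to verify, rather than a genuine obstacle, is that sharing a single set of random thresholds $\{T_v,T'_v\}$ across all fault sets does not interfere with the per-pair analysis: conditioning on any particular $F$ and $(u,v)$, Lemma~\ref{lem:3spanner_correct} only uses the independence of the thresholds on the vertices appearing in stretch-$3$ paths from $u$ to $v$ inside $G\setminus F$, which continues to hold because the thresholds are chosen independently for every vertex and are reused across $F$ only for bookkeeping. This independence plus the boosted repetition count is exactly what makes the union bound over the exponentially many fault sets go through.
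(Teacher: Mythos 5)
Your proposal is correct and follows essentially the same route as the paper: solve LP~\eqref{LP:FT_spanner} (exactly in $n^{O(r)}$ time, or via Theorem~\ref{thm:FT_LP_solve} for the bicriteria version), rerun Algorithm~\ref{alg:3spanner} with fresh randomness $\Theta(r\log n)$ times, observe that Lemma~\ref{lem:3spanner_correct} is a per-pair statement needing only one unit of capacity-respecting stretch-$3$ flow in $G\setminus F$, and union-bound over the $n^{O(r)}$ fault sets and demand edges. The paper's proof merely spells out the per-case failure probabilities from Lemma~\ref{lem:3spanner_correct} to confirm they amplify correctly under repetition, which your ``failure probability at most $1/e$ per independent run'' argument subsumes.
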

\begin{proof}
Let $\Delta$ be the number of times that Algorithm~\ref{alg:3spanner} is repeated.  The analysis of Lemma~\ref{lem:3spanner_correct} has the following failure probabilities for the different cases.  If at least $1/3$ flow is on the direct edge, then the probability of failure is $0$.  If at least $1/3$ flow is on $2$-hop paths, then the probability of failure is at most $e^{-\Delta\rho / 3}$.  If at least $1/3$ flow is on the $3$-hop paths, then we have five subcases.  For the zeroth case, if there is some edge with $\hat x_e \geq 1/\rho$ then the probability of failure is at most $(1/e)^{\Delta}$.  For case $1$, the failure probability (i.e.~the probability that $Y= 0$, which we bound by Janson's inequality) is at most $e^{-\Delta \cdot \Theta(\rho / \sqrt{n})}$.  For case $2$ the failure probability is at most $(e^{-\Theta(\rho)} + e^{-\Theta(\rho^2 / n)})^{\Delta}$, where the first term is from the Chernoff bound guaranteeing that $|\hat B|$ is large and the second term is from the probability that none of the $A$ vertices have low enough thresholds.  For the third case, the probability of failure is at most $(e^{-\Theta(\rho^2 / (n \log n))} + e^{-\Theta(\rho^2 / (n \log n))})^{\Delta}$, where the first term is from the Chernoff bound to guarantee that $|\hat B|$ is at least half its expectation and the second is from  the probability that none of the $A$ vertices have low enough thresholds.  Finally, the fourth case is symmetric to the third.

It is easy to verify that we set $\Delta = C_1 (2r+2) \log n$ and $\rho = C_2 \sqrt{n} \log n$ then all of these failure probabilities are less than $1/n^{2r+2}$, which is what we needed for the edge fault setting (we can set $\Delta = C_1 (r+2) \log n$ for the vertex fault setting).  The total approximation that we get is thus $\Delta \rho = O(r \sqrt{n} \log^2 n)$.  Combining this with Theorem~\ref{thm:FT_LP_solve} completes the proof.
\end{proof}

\section{LP-based approximation of $2$-spanner}
\label{sec:2-spanner}

The $2$-spanner problem is qualitatively and quantitatively different from $k$-spanner with $k > 2$: it is known~\cite{KP94,EP01} that it can be approximated to $O(\log n)$ and that this is tight~\cite{Kortsarz01} (assuming $P \neq NP$).
Note that this approximation algorithm is only known to work
for the unit-length version.
We first show that our LP relaxation \eqref{LP:spanner1} has integrality gap
of $\Theta(\log n)$, and thus offers comparable approximation ratio.
We further show that our approach,
namely the flow-based LP relaxation and a direct rounding procedure,
easily adapts to bounded-degree case
which was studied in the literature \cite{KP94,DK99,EP01},
and also to the fault-tolerant case,
for which no approximation was previously known.

\subsection{Lower Bound on the Integrality Gap}
\label{sec:2-spannerLB}
We first show that the integrality gap is $\Omega(\log n)$.  The intuition is that we will apply the hardness reduction from set cover to $2$-spanner to an instance of set cover that has a large integrality gap.  We actually prove the gap for the more general setting of undirected graphs; it is easy to see that this implies the same gap for directed graphs. We first describe the generic reduction, then the particular set cover instance that we apply it to.

\begin{theorem} \label{thm:2-spannerLB}
The integrality gap of LP~\eqref{LP:spanner1} for undirected unit-length $2$-spanner is $\Omega(\log n)$
\end{theorem}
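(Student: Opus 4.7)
The plan is to prove the integrality gap by a reduction from Set Cover, leveraging the fact that the natural LP relaxation of Set Cover has an integrality gap of $\Omega(\log N)$ on $N$-element instances.  For concreteness, take the classical bit-based construction with $U=\{0,1\}^{k}$ and sets $S_{i,b}=\{x\in U : x_i=b\}$: the fractional optimum is $O(1)$ (set $y_{i,b}=1/k$) while any integer cover has size $\Omega(k)=\Omega(\log N)$.  Fix such a Set Cover instance $(U,\mathcal S)$ with $|U|=N$, $|\mathcal S|=M$, fractional optimum $\tau^{\mathrm{LP}}_{\mathrm{SC}}=O(1)$, and integer optimum $\tau^{*}_{\mathrm{SC}}=\Omega(\log N)$.

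The reduction I would use constructs an undirected graph $G$ on $n=\Theta(N+M)$ vertices that models the incidences of $(U,\mathcal S)$: introduce a root $r$, an element vertex $u_i$ for each $i\in U$, a set vertex $v_S$ for each $S\in\mathcal S$, and include the edges $(r,u_i)$, $(r,v_S)$, and $(v_S,u_i)$ whenever $i\in S$.  The point is that the demand edge $(r,u_i)$ admits the $2$-path $r\to v_S\to u_i$ exactly for sets $S$ containing $i$, so $2$-spanning these root-to-element edges mirrors the Set Cover condition.  From a fractional Set Cover $y^{*}$ I would construct a feasible solution to LP~\eqref{LP:spanner1} by setting $x_{(r,v_S)}=y^{*}_S$ and distributing complementary small weights on $x_{(r,u_i)}$ and $x_{(v_S,u_i)}$ so that every edge constraint is satisfied; the total fractional cost should be $O(\tau^{\mathrm{LP}}_{\mathrm{SC}})$ plus an additive overhead from the non-Set-Cover edges.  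In the reverse direction, from any integer $2$-spanner $E'$ of $G$ I would extract a Set Cover by taking the sets whose vertices $v_S$ actually act as midpoints of $2$-paths spanning the $(r,u_i)$ edges of $E'$, arguing that this collection has size at least $\tau^{*}_{\mathrm{SC}}$ and hence $|E'|$ is $\Omega(\log N)$ times the LP value.

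The main obstacle is controlling the overhead of $G$-edges that are not themselves modelling Set Cover but nonetheless must be $2$-spanned.  A naive version of the construction forces all $N+M$ root-incident edges to appear on both sides of the ratio, which swamps the Set Cover gap down to a constant.  To handle this I would either (i)~pad the construction by blowing up each element and each set vertex into a cluster of copies of polynomial size, so that the number of demand edges (and thus the Set Cover contribution to the cost) scales superlinearly while the overhead stays $O(N+M)$, or (ii)~modify the graph so that the non-demand edges admit many alternative $2$-paths and therefore contribute only cheaply to the LP.  Combining the upper bound on the LP with the lower bound on the integer optimum then gives a ratio of $\Omega(\tau^{*}_{\mathrm{SC}}/\tau^{\mathrm{LP}}_{\mathrm{SC}})=\Omega(\log N)=\Omega(\log n)$, which is the claimed integrality gap.
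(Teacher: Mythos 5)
Your high-level strategy matches the paper's: reduce from a Set Cover instance whose LP relaxation has gap $\Omega(\log N)$, let a fractional cover induce a cheap feasible solution to LP~\eqref{LP:spanner1}, and argue that any integral $2$-spanner must encode an integral set cover at the root. However, there are two concrete gaps. First, your chosen Set Cover instance is wrong: with $U=\{0,1\}^{k}$ and $S_{i,b}=\{x: x_i=b\}$, the two sets $S_{1,0}$ and $S_{1,1}$ already cover $U$, so the integer optimum is $2$, not $\Omega(k)$. The instance you want (and the one the paper uses) is the $\mathbb{F}_2^q$ dot-product system, elements $\mathbb{F}_2^q\setminus\{\vec 0\}$ and sets $S_\alpha=\{e:\alpha\cdot e=1\}$; there the complement of any $q-1$ sets intersects in a subspace of dimension at least $1$, so every integral cover has size at least $q=\Omega(\log N)$, while each element lies in exactly half the sets, which is also exactly what makes the fractional spanner solution routable with capacity $2/M$ per root--set edge.

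Second, your fix for the overhead problem does not work as stated. Blowing up the \emph{elements} (or sets) multiplies the set--element incidence overhead by the same factor as the demand edges, and a single root still only needs to pay for one set cover, so the ratio stays constant. The paper's fix is to replicate the \emph{root} into $k=M^2$ copies $x_1,\dots,x_k$: each copy independently forces $\Omega(\log N)$ edges in any integral solution (its neighborhood must be a set cover), giving an integral cost of $\Omega(M^2\log N)=\Omega(n\log n)$, while fractionally each copy costs only $O(1)$ (capacity $2/M$ spread over its $M$ root--set edges, with root--element demands routed through the half of the sets containing each element and root--set demands routed through a clique added on the set vertices), and the incidence overhead stays $O(MN+M^2)=O(n)$. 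Without replicating the root (and without the set--set clique making the many root--set edges fractionally cheap), the construction does not yield a super-constant gap. With these two repairs your argument becomes essentially the paper's proof.
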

\begin{proof}
Suppose we have a (unweighted) set cover instance with elements $U$ and sets $\mathcal{S}$, where $|U| = N$ and $|\mathcal{S}| = M$.  We create a graph $G$ with vertex set $U \cup \mathcal{S} \cup \{x_i: i \in [k]\}$, where $k =M^2$.  In other words, there is a vertex for every element, a vertex for every set, and $k$ new vertices $x_1, \dots, x_k$.  Clearly the number of vertices is polynomial in the size of the set cover instance (it is in fact $n = M^2 + M + N$).  There is an edge from every $x_i$ to every set node and to every element node, an edge between every two set nodes, and an edge between a set node $S \in \mathcal{S}$ and every $e \in U : e \in S$.  More formally, the edge set is $\{\{x_i, S\} : i \in [k], S \in \mathcal{S}\} \cup \{\{x_i, e\} : i \in [k], e \in U\} \cup \{\{S, S'\} : S, S' \in \mathcal{S}\} \cup \{\{S, e\} : s \in \mathcal{S}, e \in U, e \in S\}$.

The set cover instance that we use has element set $\mathbb{F}_2^q \setminus \{\vec{0}\}$, so there are $2^q - 1$ elements.  There is a set $S_{\alpha}$ for every $\alpha \in \mathbb{F}_2^q$ (so there are $2^q$ sets), where $S_{\alpha} = \{e \in \mathbb{F}_2^q \setminus \{\vec{0}\} : \alpha \cdot e = 1\}$.  We are using the normal notion of dot product over $\mathbb{F}_2^q$, i.e.~$\alpha \cdot e = \alpha_1 e_1 + \dots + \alpha_q e_q$ (mod $2$).  It is easy to see that every element is in exactly half of the sets.  This large amount of overlap intuitively allows the linear program to ``cheat''.

To see that the LP has a small solution, we will set the capacity of the edges between set vertices to $1$, the edges between set vertices and element vertices to $1$, and the edges between $x_i$ vertices and element vertices to $0$.  We will also set the capacity of edges between $x_i$ vertices and set vertices to $2/M$.  Obviously this solution has cost at most $kM\frac{2}{M} + M^2 + MN = O(M^2) = O(n)$, so it remains to show that it is a feasible solution.  To show this, for every edge in the original graph we need to find a way to route at least one unit of flow subject to our capacities from one endpoint to the other along paths of length at most $2$.  This is trivial for every edge that we set to have capacity $1$, so we just need to worry about edges incident on $x_i$ nodes.  For edges of the form $\{x_i, S\}$ with $S \in \mathcal{S}$, we can send $1/M$ flow on every edge from $x_i$ to $\mathcal{S}$ (including the edge from $x_i$ to $S$, and then flow that was set to sets $S' \neq S$ can be forwarded along the $\{S',S\}$ edge.  For edges of the form $\{x_i, e\}$ with $e \in U$, we can send $2/M$ flow from $x_i$ to every set that contains $e$.  Since exactly half of the sets contain $e$ this adds up to a total flow of $1$.  This flow can then be forwarded directly to $e$, since there is an edge of capacity $1$ between $e$ and every set containing $e$.  Thus this if a feasible solution to the flow LP of cost $O(n)$.

Now we want to show than any integral solution has cost at least $\Omega(n \log n)$.  Consider some arbitrary integral solution (i.e. a setting of 0/1 capacities to every edge such that one unit of flow can be sent between the endpoints of any original edge using paths of length at most $2$).  Consider an edge $\{x_i,e\}$ with $e \in U$.  Either this edge has capacity $1$, or there is some $S \in \mathcal{S}$ with $e \in S$ such that the edges $\{x_i, S\}$ and $\{S, e\}$ both have capacity $1$.  This is because the only paths of length at most $2$ between $x_i$ and $e$ are paths of this form and the one direct edge.  Since this is true for every $e$, the vertices adjacent to $x_i$ must form a valid set cover of this original instance (where an edge directly to an element $e$ is equivalent to adding the set $\{e\}$).  Thus the degree of any $x_i$ node must be at least the size of the smallest valid set cover.  For our set cover instance, it is easy to see that the size of the smallest cover is at least $q$.  To see this, suppose otherwise, i.e.~assume there is some collection of sets $S_{\alpha_1}, \dots, S_{\alpha_{q-1}}$ that covers the elements.  Then $\cap_{i=1}^{q-1} \overline{S_{\alpha_i}} =\emptyset$, so $\cap_{i=1}^{q-1} \{e \in \mathbb{F}_2^q: \alpha_i \cdot e = 0\} = \{\vec{0}\}$.  But this is a contradiction, since the intersection of $q-1$ hyperplanes in the $q$-dimensional vector space over $\mathbb{F}_2$ cannot be just a single point (that would require at least $q$ hyperplanes).  So any valid set cover has size at least $q \geq \log N$.

So now we know that any integral solution to the flow LP has cost at least $kq \geq M^2 \log N = \Omega(n \log n)$, thus proving that the integrality gap of the flow LP is at least $\Omega(\log n)$.
\end{proof}

\subsection{Upper Bounds  via Direct Rounding}
\label{sec:2-spannerUBdirect}
We bound the integrality gap of the LP by randomized rounding of
the LP solution.
We first show a bound of $O(\log n)$,
and then refine it to $O(\log \Delta)$,
where $\Delta$ is the maximum degree of the graph.
Both results hold also for directed graphs, and easily extend to the
client-server version and to the augmentation version of \cite{EP01}
(where part of the spanner is already given).
Finally, we explain how these two bounds easily extend to
the $r$ fault-tolerant version,
losing only a factor of $r$ in the approximation guarantee.

\subsubsection{General digraphs (with unit length edges)}

\begin{theorem}
  \label{thm:UBdirect1}
  For the directed $2$-spanner problem with unit lengths, (even the version with edge
  costs), the LP relaxation \eqref{LP:spanner1} has integrality gap $O(\log n)$.
\end{theorem}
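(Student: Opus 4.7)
The plan is to use correlated randomized rounding driven by a single uniform threshold per vertex, in the spirit of Algorithm~\ref{alg:3spanner} but simpler. First solve LP~\eqref{LP:spanner1}, which for $k=2$ has polynomial size since each $\mathcal{P}_{uv}$ consists only of the direct edge and the at most $n-2$ two-hop paths. Then independently for every vertex $w\in V$ draw $T_w\in[0,1]$ uniformly, fix the inflation parameter $\alpha = 6\ln n$, and place edge $(u,v)$ into $E'$ iff $\min\{T_u,T_v\}\le \alpha\, x_{uv}$. For the cost, a union bound gives $\Pr[e\in E']\le 2\alpha x_e$, so $\EX[|E'|] \le 2\alpha \sum_e x_e = O(\log n)\cdot\mathrm{OPT}_{LP}$; replacing $|E'|$ by any weighted cost $\sum_e c_e\mathbf{1}[e\in E']$ extends this to the edge-weighted variant.

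For correctness, fix a demand $(u,v)\in E$, let $f_0$ be the LP flow on the direct edge and $f_w$ the LP flow on the two-hop path $u\to w\to v$, so that $f_0 + \sum_w f_w \ge 1$. If $f_0\ge 1/2$ then $x_{uv}\ge f_0 \ge 1/2$, hence $\alpha x_{uv}\ge 1$ and the direct edge is in $E'$ with probability one. Otherwise $\sum_w f_w \ge 1/2$. Define the event $A_w=\{T_w\le \alpha f_w\}$. The LP capacity constraints give $f_w \le \min\{x_{uw},x_{wv}\}$, so whenever $A_w$ occurs we have $\min\{T_u,T_w\}\le T_w\le \alpha x_{uw}$ and $\min\{T_w,T_v\}\le \alpha x_{wv}$, meaning the entire path $u\to w\to v$ lies in $E'$. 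The events $\{A_w\}_w$ are mutually independent (they depend on distinct $T_w$'s, disjoint from $T_u,T_v$), so
\begin{equation*}
\Pr\bigl[\text{no 2-hop path survives}\bigr]
\;\le\; \prod_{w}\bigl(1-\min\{1,\alpha f_w\}\bigr)
\;\le\; \exp\!\Bigl(-\alpha \sum_w f_w\Bigr)
\;\le\; e^{-\alpha/2} \;=\; n^{-3}.
\end{equation*}
A union bound over the at most $n^2$ demand edges, together with Markov's inequality on the cost, yields a valid $2$-spanner of size $O(\log n)\cdot \mathrm{OPT}_{LP}$ with positive probability, which bounds the integrality gap.

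The one genuine design point is the following: rounding edges independently with probability proportional to $x_e$ would make a fixed 2-hop path survive only with probability quadratic in $f_w$, which is far too small to yield an $O(\log n)$ bound; vertex thresholds correlate the two edges of each 2-hop path through their common midpoint $w$, so the path survives with probability \emph{linear} in $f_w$, which is precisely what lets the exponential above telescope against $\sum_w f_w \ge 1/2$. Beyond identifying this correlation structure I do not anticipate any real obstacle: the capacity bound $f_w\le \min\{x_{uw},x_{wv}\}$ is immediate from the LP, the independence claim follows because $u$ and $v$ are fixed and only the midpoint thresholds $T_w$ appear in the sufficient condition, and the whole argument is oblivious to edge costs and transfers verbatim to the edge-weighted version noted parenthetically in the statement.
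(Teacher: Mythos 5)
Your proposal is correct and is essentially the paper's own proof: the same per-vertex uniform threshold $T_w$, the same inclusion rule $\min\{T_u,T_v\}\le \rho\, x_{uv}$ with $\rho=\Theta(\log n)$, the same case split on whether the direct edge carries at least half the flow, the same product bound $\prod_w(1-\rho f_w)\le e^{-\rho/2}$ over the independent midpoint thresholds, and the same Markov-plus-union-bound conclusion. No gaps.
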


\begin{proof}
Consider a directed graph $G=(V,E)$ with edge costs $c_e\ge 0$
(but unit edge lengths) and a feasible solution to LP \eqref{LP:spanner1}.
The argument for undirected graphs is identical.
We employ the rounding procedure depicted in Algorithm \ref{alg:2spanner},
where $C>0$ is a sufficiently large constant to be determined later.  This is basically a simplified version of the rounding algorithm that we used for directed unit-length $3$-spanner.

\begin{algorithm}[H]
\caption{Rounding algorithm for $2$-spanner.}
\label{alg:2spanner}
Set $\rho=C\ln n$.
\\
For every $v\in V$ choose independently a random threshold $T_v\in[0,1]$.
\\
Output $E'=\{(u,v)\in E:\ \minn{T_u,T_v} \leq \rho\cdot x_{u,v}\}$.
\end{algorithm}

The output of this algorithm has expected total cost
\[
  \EX[\sum_{e\in E'} c_e]
  = \sum_{(u,v)\in E} c_{u,v} \Pr[\minn{T_u,T_v}
  \leq \rho\cdot x_{u,v}]
  \le 2\rho\sum_{e\in E} c_ex_e.
\]
By Markov's inequality, with probability at least $2/3$,
the cost of the output solution exceeds the LP by factor
$\leq 3\rho \leq O(\log n)$.

We proceed to show that the output forms a $2$-spanner (with high probability);
recall that it suffices to verify the stretch for  \emph{edges} $(u,v)\in E$.
So fix $(u,v)\in E$ and consider a $u-v$ path of length $2$,
denoted henceforth $P_z=(u,z,v)$.
A similar notation but with $z=\bot$ refers to a path of length $1$,
i.e. $P_\bot=(u,v)$.
By the LP constraints, the sum of flows along all these paths is
\[
  \sum_{z\in V\cup\{\bot\}} f_{P_z} \geq 1.
\]
Here and in the sequel, notation like $z\in V$ implicitly excludes
any $z\in V$ for which there is no such path $P_z=(u,z,v)$.
If $f_{P_\bot}\ge 1/2$,
i.e. at least half of this flow is routed directly along the edge $(u,v)$,
then by the LP constraints $x_{u,v}\ge 1/2$
and with probability $1$ the edge $(u,v)$ is included in $E'$.
Otherwise, $\sum_{z\in V} f_{P_z}\ge 1/2$;
now observe that whenever $T_z/\rho \leq f_{P_z}$,
the algorithm's output $E'$ contains the entire $2$-path $P_z$
(because by the LP constraints $\minn{x_{uz},x_{zv}} \ge f_{P_z} \ge T_z/\rho$).
Put in the contrapositive,
for $E'$ to contain no such $2$-path $P_z$,
the event $\{\forall z\in V,\ T_z/\rho > f_{P_z} \}$ must occur,
which happens with probability,
\begin{equation}
  \label{eq:StretchProb1}
  \Pr[\forall z\in V,\ T_z/\rho > f_{P_z} ]
  = \prod_{z\in V} (1-\rho\cdot f_{P_z})
  \leq e^{-\rho\sum_{z\in V} f_{P_z}}
  \leq n^{-C/2}.
\end{equation}
Now setting $C=6$ and taking a union bound over less than $n^2$
edges $(u,v)\in E$,
we get that with probability at least $1-1/n$,
the algorithm outputs a $2$-spanner of $G$.

Taking a union bound, we conclude that with probability at least $1/2$,
the algorithm outputs a $2$-spanner of cost at most the LP value times
$O(C\log n)$, proving Theorem \ref{thm:UBdirect1}.
\end{proof}

\subsubsection{Refinement to digraphs of bounded-degree}

\begin{theorem}
  \label{thm:UBdirect2}
  For the directed $2$-spanner problem with unit lengths on graphs of maximum (in
  and out) degree at most $\Delta\ge 2$, the LP relaxation
  \eqref{LP:spanner1} has integrality gap $O(\log \Delta)$.
\end{theorem}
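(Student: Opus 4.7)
The plan is to reuse Algorithm~\ref{alg:2spanner} with the smaller inflation factor $\rho = C\ln\Delta$ (in place of $C\ln n$), and to compensate for the weaker per-edge bound by an explicit augmentation step.  The cost calculation from the proof of Theorem~\ref{thm:UBdirect1} carries over verbatim: the expected size of the rounded edge set is at most $2\rho\sum_e x_e = O(\log\Delta)\cdot\mathrm{LP}$.  The per-edge stretch calculation leading to~\eqref{eq:StretchProb1} is also unchanged and yields failure probability at most $e^{-\rho/2}=\Delta^{-C/2}$ for every $(u,v)\in E$; however, this is no longer strong enough to union-bound over all $|E|$ edges (which can be as large as $n\Delta$), so the rounded $E'$ by itself may fail to be a 2-spanner.

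To repair the output, I would add to $E'$ every edge $(u,v)\in E$ whose 2-path condition is violated; this makes the result a 2-spanner deterministically.  The expected number of such augmenting edges is at most $|E|\cdot e^{-\rho/2}=|E|\cdot\Delta^{-C/2}$.  The main technical step is then to prove the LP lower bound
\[
  \sum_{e\in E} x_e \ \ge\ \frac{|E|}{2\Delta+1},
\]
which, for $C$ chosen at least $4$, makes the augmentation contribute only $O(\mathrm{LP})$ edges in expectation and therefore be absorbed into the $O(\log\Delta)\cdot\mathrm{LP}$ bound.

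To prove this lower bound I would argue locally per demand.  For each $(u,v)\in E$, the LP routes one unit of flow along paths that use only the direct edge $(u,v)$ or a 2-hop path through some $z\in N^{+}(u)\cap N^{-}(v)$; the capacity constraints then force $\sum_{e\in R(u,v)} x_e\ge 1$, where $R(u,v)$ is the set of edges participating in some path in $\mathcal P_{u,v}$ (the direct edge together with $(u,z)$ and $(z,v)$ for each midpoint $z$), and $|R(u,v)|\le 2\Delta+1$.  Summing over the $|E|$ demand edges and swapping the order of summation, one checks that each edge $e$ is relevant for at most $2\Delta+1$ demands (once as a direct edge, up to $\Delta$ times as a first hop, up to $\Delta$ times as a last hop), which yields $(2\Delta+1)\sum_e x_e\ge |E|$.

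The main obstacle, I think, is conceptual rather than computational: recognizing that the loss from a weaker union bound can be paid for by deterministically adding the uncovered edges, and then pinning down the LP lower bound on $\sum_e x_e$ that makes this payoff cheap.  Once these ingredients are in place, taking expectations gives total cost $O(\log\Delta)\cdot\mathrm{LP}$, and Markov's inequality guarantees an outcome attaining this bound; since the augmentation always yields a valid 2-spanner, this establishes the claimed integrality gap.
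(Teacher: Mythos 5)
Your proof is correct, but it takes a genuinely different route from the paper's. The paper keeps the rounding purely probabilistic and invokes the Lov\'asz Local Lemma: each bad event $A_{u,v}$ (no short $u$--$v$ path in $E'$) depends only on the thresholds of $O(\Delta)$ vertices, hence on at most $O(\Delta^2)$ other events, and the cost is controlled by folding auxiliary events $B_u$ (bounding the number of selected edges charged to each vertex $u$ via a Chernoff bound) into the same LLL application, so that with positive probability the output is simultaneously a $2$-spanner and cheap. You instead accept that the union bound fails, repair the output deterministically by adding every uncovered demand edge, and pay for the repair with the LP lower bound $\sum_e x_e \ge |E|/(2\Delta+1)$; that bound is valid --- summing the capacity constraints over the at most $2\Delta+1$ edges of $R(u,v)$ gives $\sum_{e\in R(u,v)} x_e \ge \sum_P |P| f_P \ge 1$, and each edge lies in $R(u,v)$ for at most $2\Delta+1$ demands --- so the augmentation costs only $|E|\,\Delta^{-C/2} = O(\mathrm{LP})$ in expectation. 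Your argument is more elementary (no LLL) and immediately constructive: the augmented output is \emph{always} a valid $2$-spanner, and Markov's inequality on the cost gives an efficient Las Vegas algorithm, whereas the paper's LLL argument only guarantees a positive (possibly exponentially small) success probability and would need a constructive LLL to be made algorithmic. What the paper's approach buys in exchange is a template that the authors reuse verbatim for the fault-tolerant bounded-degree version (Theorem~\ref{thm:UBdirect2FT}), where the events are indexed by (demand, fault-set) pairs and locality of dependence is again the key; your augmentation idea could also be adapted there (adding the direct edge $(u,v)$ repairs the demand for every surviving fault set), but that adaptation is not as immediate.
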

\begin{proof}
Suppose now that (in and out) vertex-degrees in $G$ are at most $\Delta\ge 2$.
Consider the same rounding algorithm, except that $\rho=C\log\Delta$,
and let us show that it succeeds with positive (but possibly small) probability.
We shall require the following symmetric form of the Lov\'asz Local Lemma
(see, e.g., \cite{AS00}).
\begin{lemma}[\Lovasz Local Lemma]
\label{lem:lll}
Let $A_1, \ldots, A_n$ be events in an arbitrary probability space.
Suppose that each $A_i$ is mutually independent of all but at most $d$
other events $A_j$,
and suppose that $\Pr[A_i] \leq p$ for all $1 \leq i
\leq n$.  If $ep(d+1) \leq 1$ then $\Pr[\wedge_{i=1}^n \overline{A_i}] > 0$.
\end{lemma}
\REMOVE{
We shall require the following general form of the Lov\'asz Local Lemma
(see, e.g., \cite{AS00}).
\begin{lemma}[\Lovasz Local Lemma]
\label{lem:lll}
Let $A_1, \ldots, A_N$ be events in an arbitrary probability space,
and let $D$ be a dependency digraph for them,
i.e. each $A_i$ is mutually independent of all the events
$\{A_j:\ (i,j)\notin E(D)\}$.
Suppose there are $p_1,\ldots,p_n\in [0,1)$ such that
$\Pr[A_i] \leq p_i \prod_{(i,j)\in E(D)} (1-p_j)$ for all $i$.
Then
$$\textstyle
  \Pr[\wedge_{i=1}^N \overline{A_i}] \ge \prod_{i=1}^N (1-p_i) > 0.$$
\end{lemma}
}

For an edge $(u,v)\in E$, let $A_{u,v}$ be the event that
the output $E'$ contains no $u-v$ path of length at most $2$.
For a given edge $(u,v)\in E$, the same analysis as above shows that
\begin{equation}
  \label{eq:StretchProb2}
  \Pr[A_{u,v}]
  \leq e^{-\rho \sum_{z\in V} f_{P_z}}
  \leq \Delta^{-C/2}.
\end{equation}
Observe that the event $A_{u,v}$ depends only on the random variables
$T_z$ for $z\in N_+(u)$,
where $N_+(u)$ denotes the set of out-neighbors of $u$ in $G$
and $u$ itself (and in our case, $N_+(u)$ includes $v$).
Thus, $A_{u,v}$ is mutually independent of all events $A_{u',v'}$
for which $N_+(u) \cap N_+(u')=\emptyset$,
which by a simple calculation means it is independent of all but at most
$(\Delta+1)^2$ other events $A_{u',v'}$.

The local lemma applies to these events $A_{u,v}$,
but it guarantees a very small positive probability,
which is not enough to bound the cost of the solution $E'$ via a union bound.
Instead, we incorporate the analysis of the cost $|E'|$ into the local lemma,
by ``splitting'' it into multiple local events.
For a vertex $u\in V$, define the random variable $Z_u^+$
to be the number of outgoing edges $(u,v)\in E$
for which $T_v \le \rho\cdot x_{u,v}$.
Define $Z_u^-$ similarly for incoming edges, namely
$\#\aset{(v,u)\in E:\ T_v \le \rho\cdot x_{u,v}}$.
Observe that we can bound the cost of the solution by
\[ \textstyle
  |E'| \leq \sum_{u\in V} (Z_u^+ + Z_u^-),
\]
by simply ``charging'' every edge chosen to $E'$ to one of its endpoints.
Define the event
\[ \textstyle
  B_u
  = \Big\{Z_u^++Z_u^- \geq 4\rho\cdot
           \big(\sum_{(u,v)\in E} x_{u,v} + \sum_{(v,u)\in E} x_{v,u}\big) \Big\}.
\]
We may assume $u$ is incident to at least one edge, say an outgoing edge,
and thus $\sum_{(u,v)\in E} x_{u,v}\ge 1$.
Since $Z_u^+$ is the sum of independent indicators
with expectation $\EX[Z_u^+] \leq \rho \sum_{(u,v)\in E} x_{u,v}$,
we have by a Chernoff bound (see e.g. \cite{MR})
\[ \textstyle
  \Pr\big[Z_u^+ \geq 2\rho \sum_{(u,v)\in E} x_{u,v}\big]
  \leq e^{-(1/4)\rho\sum_{(u,v)\in E} x_{u,v}}
  \leq \Delta^{-C/4}.
\]
Similarly, $\EX[Z_u^-] \leq \rho\sum_{(v,u)\in E} x_{v,u}$
and thus
$
  \Pr\big[Z_u^- \geq 2\rho \maxx{1,\sum_{(v,u)\in E} x_{v,u}}\big]
  \leq \Delta^{-C/4}
$.
For $B_u$ to occur, at least one of the last two events must occur, hence
\[
  \Pr[B_u] \leq 2 \Delta^{-C/4}.
\]
Observe further that the event $B_u$ depends only on the random variables $T_z$
for $z\in N_+(u)\cup N_-(u)$.

We can now apply the local lemma to all the events $A_{u,v}$ and $B_u$.
Indeed, each of these events is mutually independent of all but
at most $d=O(\Delta^2)$ other events,
and setting $C$ to be a sufficiently large constant and $p=2\Delta^{-C/4}$
yields $p(d+1) \le 1/e$.
Thus, with positive probability none of the events $A_{u,v}$ and $B_u$
occurs, meaning that $E'$ is a $2$-spanner of $G$ and
\[
  |E'|
  \leq \sum_{u\in V} (Z_u^+ + Z_u^-)
  \leq \sum_{u\in V} 4\rho\cdot
           \Big (\sum_{v:(u,v)\in E} x_{u,v} + \sum_{v:(v,u)\in E} x_{v,u} \Big)
  = 8 \mathrm{LP}.
\]
 \end{proof}

\subsubsection{Extension to the fault-tolerant version}

The proofs above easily extends to the fault-tolerant version of the problem,
with respect to the LP relaxation \eqref{LP:FT_spanner}.
For concreteness, we discuss vertex-faults,
but the same arguments hold for edge-faults as well.
Recall that $r$ denotes the maximum number of faults.

The rounding procedure is the same (Algorithm \ref{alg:2spanner}),
except that the factor $\rho$ is increased by a factor of $r$.
For every possible fault set $F\subset V$, $|F|\leq r$,
and every $(u,v)\in E\setminus F$,
we get similarly to \eqref{eq:StretchProb1} that the probability
$E'\setminus F$ does not contain a $u-v$ path of stretch $2$
is $\leq e^{-\rho\sum_{z\in V} f_{P_z}} \leq n^{-Cr/2}$.
Since there are at most $n^{r+2}$ such choices,
we can apply a union bound and prove the following theorem.

\begin{theorem}
  \label{thm:UBdirect1FT}
  For the $r$ fault-tolerant version of the $2$-spanner problem,
  even in the versions with edge costs and directed graphs,
  the LP relaxation \eqref{LP:FT_spanner} has integrality gap $O(r\log n)$.
\end{theorem}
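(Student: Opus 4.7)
The plan is to mimic the proof of Theorem~\ref{thm:UBdirect1} but apply Algorithm~\ref{alg:2spanner} to the LP~\eqref{LP:FT_spanner} solution, with the threshold parameter scaled up to $\rho = Cr\ln n$ for a sufficiently large constant $C$. The algorithm itself is unchanged: pick independent thresholds $T_v\in[0,1]$ for each $v\in V$, and output $E'=\{(u,v)\in E:\ \min\{T_u,T_v\}\le \rho\cdot x_{u,v}\}$. Note that the capacity variables $x_e$ are shared across all fault sets in LP~\eqref{LP:FT_spanner}, so cost accounting is identical to the non-fault-tolerant case.

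For the cost, I would observe that $\EX[|E'|]\le \sum_{(u,v)\in E}2\rho\cdot x_{u,v}\le 2\rho\cdot \mathrm{LP}$, so by Markov's inequality with probability at least $2/3$ the output costs at most $O(r\log n)\cdot\mathrm{LP}$. (The same argument handles nonnegative edge costs $c_e$ by inserting them into the sum.)

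For correctness, I would fix a fault set $F\subseteq V$ with $|F|\le r$ and a surviving edge $(u,v)\in E\setminus E_F$, and invoke the flow constraints of LP~\eqref{LP:FT_spanner} for this specific $F$: there are flow values $\{f_P^F\}$ on paths in $\mathcal P_{u,v}^F$ summing to at least $1$. As in Theorem~\ref{thm:UBdirect1}, either $f_{P_\bot}^F\ge 1/2$ (forcing $x_{u,v}\ge 1/2$ and hence $(u,v)\in E'$ with probability $1$), or $\sum_{z\in V\setminus F} f_{P_z}^F\ge 1/2$. In the latter case, whenever $T_z/\rho\le f_{P_z}^F$ both edges of $P_z=(u,z,v)$ lie in $E'\setminus F$ (since $z\notin F$ and the LP's capacity constraints for this $F$ give $\min\{x_{uz},x_{zv}\}\ge f_{P_z}^F$). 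Hence the probability that $E'\setminus F$ contains no $u$-$v$ path of stretch $2$ is bounded by
\[
  \prod_{z\in V\setminus F}\bigl(1-\rho\cdot f_{P_z}^F\bigr)
  \le e^{-\rho\sum_z f_{P_z}^F}
  \le e^{-\rho/2}
  = n^{-Cr/2}.
\]

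To finish, I would take a union bound over all $(F,(u,v))$ pairs; the number of such pairs is at most $\binom{n}{r}\cdot n^2\le n^{r+2}$. Choosing $C\ge 6$ makes the failure probability at most $n^{-r}$, so with probability at least $1-n^{-r}$ the set $E'$ is an $r$-vertex-fault-tolerant $2$-spanner. Combining with the Markov bound on cost yields an integrality gap of $O(r\log n)$. The edge-fault version is identical after replacing the $n^{r+2}$ bound by $m^r\cdot n^2\le n^{2r+2}$ and adjusting $C$ accordingly. No step is really an obstacle here; the only subtlety is making sure the union bound exponent $Cr/2$ dominates the enumeration exponent $r+2$, which is what forces $\rho$ to scale linearly in $r$ (and hence produces the extra factor of $r$ in the integrality gap).
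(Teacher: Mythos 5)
Your proposal is correct and matches the paper's proof essentially verbatim: the paper also reruns Algorithm~\ref{alg:2spanner} with $\rho$ scaled up by a factor of $r$, bounds the per-pair failure probability by $n^{-Cr/2}$ exactly as in \eqref{eq:StretchProb1}, and takes a union bound over the at most $n^{r+2}$ choices of fault set and surviving edge. The cost accounting via the shared capacity variables is likewise the same.
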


The proof for bounded-degree graphs is similar:
By the same analysis as \eqref{eq:StretchProb2},
for a given $F$ and $(u,v)\in E\setminus F$, we get
$  \Pr[A_{u,v}^F]
  \leq e^{-\rho \sum_{z\in V} f_{P_z}}
  \leq \Delta^{-Cr/2}
$,
and $\Pr[B_u^F] \le 2\Delta^{-Cr/4}$.
Notice that we may restrict this collection of events to
cases where all vertices of $F$ are within distance at most $2$ from $u$,
and then each event is mutually independent of all
but at most $\Delta^{O(r)}$ events.
Applying now the local lemma yields the following theorem.

\begin{theorem}
  \label{thm:UBdirect2FT}
  For the $r$-fault-tolerant version of the $2$-spanner problem
  on directed graphs of maximum (in and out) degree at most $\Delta\ge 2$,
  the LP relaxation \eqref{LP:FT_spanner} has integrality gap $O(r\log \Delta)$.
\end{theorem}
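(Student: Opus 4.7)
}
The plan is to combine the Lov\'asz Local Lemma argument of Theorem \ref{thm:UBdirect2} with the union-bound-style extension to fault sets used in Theorem \ref{thm:UBdirect1FT}. I would run the same rounding procedure (Algorithm \ref{alg:2spanner}) applied to a feasible solution of LP \eqref{LP:FT_spanner}, but with the scaling factor increased to $\rho = Cr\log\Delta$ for a sufficiently large constant $C$. The key point is that LP \eqref{LP:FT_spanner} supplies, for every fault set $F$ of size at most $r$ and every surviving edge $(u,v)\in E\setminus E_F$, a flow $\{f_P^F\}$ of value $\geq 1$ along stretch-$2$ paths in $G\setminus F$, and all of these flows share the same capacities $x_e$.

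For every pair $(F,(u,v))$ with $|F|\le r$, $u,v\notin F$, and $(u,v)\in E$, let $A_{u,v}^F$ be the event that $E'\setminus F$ contains no length-$\le 2$ path from $u$ to $v$. Repeating the calculation that led to \eqref{eq:StretchProb2}, but using the flow $\{f_P^F\}$ for the fault set $F$, gives $\Pr[A_{u,v}^F]\le \Delta^{-Cr/2}$. For the cost, I would keep the events $B_u^F$ defined exactly as $B_u$ in the proof of Theorem \ref{thm:UBdirect2} (they do not actually depend on $F$, so one can even take a single $B_u$ per vertex), with the Chernoff estimate still giving $\Pr[B_u^F]\le 2\Delta^{-Cr/4}$ once $\rho$ is rescaled.

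The main obstacle, and the place where the bounded-degree hypothesis pays off, is to show that each event is mutually independent of only $\Delta^{O(r)}$ others, so that the symmetric local lemma applies. Here I would use the observation hinted at just before the theorem: if every vertex of $F$ lies at (directed) distance more than $2$ from $u$, then $F$ cannot destroy any $2$-hop $u$-$v$ path, so the event $A_{u,v}^F$ is already implied by $A_{u,v}^{F\cap B}$ where $B$ is the set of vertices within distance $2$ from $u$. Therefore it suffices to include in the local-lemma collection only those events $A_{u,v}^F$ with $F\subseteq B$, and similarly restrict the $B_u^F$ events to fault sets inside the $2$-neighborhood of $u$. Since $|B|\le 1+\Delta+\Delta^2$, the number of relevant fault sets per edge is at most $\Delta^{O(r)}$. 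The event $A_{u,v}^F$ depends only on the thresholds $T_z$ for $z\in N_+(u)\cup\{u,v\}$, and $B_u^F$ only on $T_z$ for $z\in N_+(u)\cup N_-(u)\cup\{u\}$; thus two events are dependent only when their ``centers'' share an out- or in-neighbor, of which there are at most $O(\Delta^2)$ choices, each contributing $\Delta^{O(r)}$ fault-set variants. The total dependency degree is therefore $d=\Delta^{O(r)}$.

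Finally, I would choose $C$ large enough that $e\cdot p\cdot (d+1)\le 1$ for $p=2\Delta^{-Cr/4}$ and $d=\Delta^{O(r)}$; this is possible because the probability is exponentially small in $Cr\log\Delta$ while the dependency degree is only exponential in $r\log\Delta$ with a fixed exponent, so a sufficiently large $C$ wins. By the symmetric Lov\'asz Local Lemma (Lemma \ref{lem:lll}), with positive probability none of the events $A_{u,v}^F$ or $B_u^F$ occurs. On this event, $E'$ is an $r$-vertex fault-tolerant $2$-spanner of $G$, and summing the bounds implied by $\overline{B_u^F}$ over all vertices $u$ gives $|E'|\le O(\rho)\cdot\mathrm{LP}=O(r\log\Delta)\cdot\mathrm{LP}$, establishing the claimed integrality gap.
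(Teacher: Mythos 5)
Your proposal is correct and follows essentially the same route as the paper: rescale $\rho$ to $Cr\log\Delta$, bound $\Pr[A_{u,v}^F]\le\Delta^{-Cr/2}$ and $\Pr[B_u^F]\le 2\Delta^{-Cr/4}$ using the per-fault-set flows of LP \eqref{LP:FT_spanner}, restrict attention to fault sets within distance $2$ of $u$ to get dependency degree $\Delta^{O(r)}$, and apply the symmetric local lemma. Your justification for why the restriction to nearby fault sets loses nothing (the event $A_{u,v}^F$ coincides with $A_{u,v}^{F\cap B}$ since length-$\le 2$ paths only use the out-neighborhood of $u$) is a detail the paper only gestures at, and you supply it correctly.
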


\newpage

\bibliographystyle{alphainit}
\bibliography{robi,spanner}

\newcommand{\etalchar}[1]{$^{#1}$}
\begin{thebibliography}{ADD{\etalchar{+}}93}

\bibitem[ADD{\etalchar{+}}93]{ADDJS93}
I.~Alth\"{o}fer, G.~Das, D.~Dobkin, D.~Joseph, and J.~Soares.
\newblock On sparse spanners of weighted graphs.
\newblock {\em Discrete Comput. Geom.}, 9(1):81--100, 1993.

\bibitem[AP95]{AP95}
B.~Awerbuch and D.~Peleg.
\newblock Online tracking of mobile users.
\newblock {\em J. ACM}, 42(5):1021--1058, 1995.

\bibitem[AS00]{AS00}
N.~Alon and J.~H. Spencer.
\newblock {\em The probabilistic method}.
\newblock Wiley-Interscience [John Wiley \& Sons], New York, second edition,
  2000.

\bibitem[AYZ95]{AYZ95}
N.~Alon, R.~Yuster, and U.~Zwick.
\newblock Color-coding.
\newblock {\em J. ACM}, 42(4):844--856, 1995.

\bibitem[BGJ{\etalchar{+}}09]{BGJRW09}
A.~Bhattacharyya, E.~Grigorescu, K.~Jung, S.~Raskhodnikova, and D.~P. Woodruff.
\newblock Transitive-closure spanners.
\newblock In {\em SODA '09: Proceedings of the twentieth Annual ACM-SIAM
  Symposium on Discrete Algorithms}, pages 932--941, Philadelphia, PA, USA,
  2009. Society for Industrial and Applied Mathematics.

\bibitem[BRR10]{BRR10}
P.~Berman, S.~Raskhodnikova, and G.~Ruan.
\newblock Finding sparser directed spanners.
\newblock In {\em Proceedings of the 30th Annual Conference on Foundations of
  Software Technology and Theoretical Computer Science (FSTTCS)}, 2010.
\newblock To appear. Preprint available from
  \url{http://www.cse.psu.edu/~sofya/brr-spanners-lipics.pdf}.

\bibitem[CHK09]{CHK09}
M.~Charikar, M.~Hajiaghayi, and H.~J. Karloff.
\newblock Improved approximation algorithms for label cover problems.
\newblock In A.~Fiat and P.~Sanders, editors, {\em ESA}, volume 5757 of {\em
  Lecture Notes in Computer Science}, pages 23--34. Springer, 2009.

\bibitem[CLPR09]{CLPR09}
S.~Chechik, M.~Langberg, D.~Peleg, and L.~Roditty.
\newblock Fault-tolerant spanners for general graphs.
\newblock In {\em 41st Annual ACM Symposium on Theory of Computing}, pages
  435--444. ACM, 2009.

\bibitem[DK99]{DK99}
Y.~Dodis and S.~Khanna.
\newblock Design networks with bounded pairwise distance.
\newblock In {\em 31st Annual ACM Symposium on Theory of computing}, pages
  750--759. ACM, 1999.

\bibitem[DP09]{DP09}
D.~Dubhashi and A.~Panconesi.
\newblock {\em Concentration of Measure for the Analysis of Randomized
  Algorithms}.
\newblock Cambridge University Press, New York, NY, USA, 2009.

\bibitem[EEST08]{EEST08}
M.~Elkin, Y.~Emek, D.~A. Spielman, and S.-H. Teng.
\newblock Lower-stretch spanning trees.
\newblock {\em SIAM J. Comput.}, 38(2):608--628, 2008.

\bibitem[EP01]{EP01}
M.~Elkin and D.~Peleg.
\newblock The client-server 2-spanner problem with applications to network
  design.
\newblock In F.~Comellas, J.~F{\`a}brega, and P.~Fraigniaud, editors, {\em
  Proceedings of the 8th International Colloquium on Structural Information and
  Communication Complexity (SIROCCO)}, volume~8 of {\em Proceedings in
  Informatics}, pages 117--132. Carleton Scientific, 2001.

\bibitem[EP05]{EP05}
M.~Elkin and D.~Peleg.
\newblock Approximating k-spanner problems for $k>2$.
\newblock {\em Theor. Comput. Sci.}, 337(1-3):249--277, 2005.

\bibitem[EP07]{EP07}
M.~Elkin and D.~Peleg.
\newblock The hardness of approximating spanner problems.
\newblock {\em Theor. Comp. Sys.}, 41(4):691--729, 2007.

\bibitem[Has92]{Has92}
R.~Hassin.
\newblock Approximation schemes for the restricted shortest path problem.
\newblock {\em Math. Oper. Res.}, 17(1):36--42, 1992.

\bibitem[Kho02]{Khot02}
S.~Khot.
\newblock On the power of unique 2-prover 1-round games.
\newblock In {\em 34th Annual ACM Symposium on the Theory of Computing}, pages
  767--775, July 2002.

\bibitem[Kor01]{Kortsarz01}
G.~Kortsarz.
\newblock On the hardness of approximating spanners.
\newblock {\em Algorithmica}, 30(3):432--450, 2001.

\bibitem[KP94]{KP94}
G.~Kortsarz and D.~Peleg.
\newblock Generating sparse 2-spanners.
\newblock {\em J. Algorithms}, 17(2):222--236, 1994.

\bibitem[LR01]{LR01}
D.~H. Lorenz and D.~Raz.
\newblock A simple efficient approximation scheme for the restricted shortest
  path problem.
\newblock {\em Operations Research Letters}, 28(5):213 -- 219, 2001.

\bibitem[MR95]{MR}
R.~Motwani and P.~Raghavan.
\newblock {\em Randomized Algorithms}.
\newblock Cambridge University Press, 1995.

\bibitem[PS89]{PS89}
D.~Peleg and A.~A. Sch{\"a}ffer.
\newblock Graph spanners.
\newblock {\em J. Graph Theory}, 13(1):99--116, 1989.

\bibitem[RT87]{RaTh:RandomizedRounding}
P.~Raghavan and C.~D. Thompson.
\newblock Randomized rounding: a technique for provably good algorithms and
  algorithmic proofs.
\newblock {\em Combinatorica}, 7(4):365--374, 1987.

\bibitem[ST04]{ST04a}
D.~A. Spielman and S.-H. Teng.
\newblock Nearly-linear time algorithms for graph partitioning, graph
  sparsification, and solving linear systems.
\newblock In {\em 36th Annual ACM Symposium on Theory of Computing}, pages
  81--90. ACM, 2004.

\bibitem[TZ05]{TZ05}
M.~Thorup and U.~Zwick.
\newblock Approximate distance oracles.
\newblock {\em J. ACM}, 52(1):1--24, 2005.

\end{thebibliography}

\end{document}